% !TeX spellcheck = en_US
% !BIB program = bibtex 

\documentclass[a4paper]{amsart}\usepackage[a4paper,margin=2.5cm,includefoot]{geometry}
\usepackage[utf8]{inputenc}
\usepackage[english]{babel}
\usepackage[T1]{fontenc}

\usepackage{amsmath,amssymb,mathtools,bm}
\usepackage{csquotes}
\usepackage{amstext}
\usepackage{amsthm}
\usepackage{bbm}
\usepackage{enumerate}
\usepackage{hyperref}
\usepackage[nameinlink,capitalise]{cleveref}
\usepackage{braket}
\usepackage{dsfont}
\usepackage{color}
\usepackage{tikz}
\usepackage{pgfplots}

%some command altering
\makeatletter
%subsection numbers bold
\def\@seccntformat#1{%
	\protect\textup{\protect\@secnumfont
		\ifnum\pdfstrcmp{subsection}{#1}=0 \bfseries\fi% subsection # in \bfseries
		\ifnum\pdfstrcmp{subsubsection}{#1}=0 \itshape\fi%subsubsection # in \itshape
		\csname the#1\endcsname
		\protect\@secnumpunct
	}%
}
%remarks and examples with italic numbers
\renewcommand{\@upn}{}
%crefnosort command
\DeclareRobustCommand{\crefnosort}[1]{%
	\begingroup\@cref@sortfalse\cref{#1}\endgroup
}
\makeatother

%%%THMENVIRONMENTS
\numberwithin{equation}{section}
\newtheorem{thm}{Theorem}[section]
\newtheorem{lem}[thm]{Lemma}
\newtheorem{prop}[thm]{Proposition}
\newtheorem{cor}[thm]{Corollary}

\theoremstyle{definition}

\newtheorem{hyp}{Hypothesis}
\renewcommand*{\thehyp}{\Alph{hyp}}

\theoremstyle{remark}
\newtheorem{rem}[thm]{Remark}
\newtheorem{ex}[thm]{Example}

\crefname{hyp}{Hypothesis}{Hypotheses}
\Crefname{hyp}{Hypothesis}{Hypotheses}
\crefname{lem}{Lemma}{Lemmas}
\Crefname{lem}{Lemma}{Lemmas}
\crefname{thm}{Theorem}{Theorems}
\Crefname{thm}{Theorem}{Theorems}
\crefname{prop}{Proposition}{Propositions}
\Crefname{prop}{Proposition}{Propositions}
\crefname{enumi}{}{}
\Crefname{enumi}{}{}
\creflabelformat{enumi}{#2(#1)#3}
\crefname{equation}{}{}
\Crefname{equation}{}{}
\crefname{rem}{Remark}{Remarks}
\Crefname{rem}{Remark}{Remarks}

%Change numbers of remarks and examples to italic
\makeatletter
%\g@addto@macro{\thm@space@setup}{\thm@headpunct{:}} %this would replace the '.' by a ':'
\renewcommand{\@upn}{} % to use the same font for the number as for the head
\makeatother

%no indentation after certain environments
%\usepackage{etoolbox}
%\makeatletter
%\patchcmd{\endthm}{\@endpefalse}{}{}{}
%\patchcmd{\endcor}{\@endpefalse}{}{}{}
%\patchcmd{\endlem}{\@endpefalse}{}{}{}
%\patchcmd{\endprop}{\@endpefalse}{}{}{}
%\patchcmd{\endproof}{\@endpefalse}{}{}{}
%\makeatother

%add an enumthm/enumhyp/enumlem/enumcor environment which creates sublists
\usepackage[inline]{enumitem}
\newlist{enumthm}{enumerate}{1} % set up a dedicated enumeration env.
\setlist[enumthm]{label=\upshape(\roman*),ref=\thethm~(\roman*)}  %labels are upshape, references as environment
\crefalias{enumthmi}{thm} % alias 'enumthmi' counter to 'thm'
%same procedure for all other enumeratable environments:
\newlist{enumcor}{enumerate}{1}
\setlist[enumcor]{label=\upshape(\roman*),ref=\thecor~(\roman*)}
\crefalias{enumcori}{cor}
\newlist{enumlem}{enumerate}{1}
\setlist[enumlem]{label=\upshape(\roman*),ref=\thelem~(\roman*)}
\crefalias{enumlemi}{lem}
\newlist{enumprop}{enumerate}{1}
\setlist[enumprop]{label=\upshape(\roman*),ref=\theprop~(\roman*)}
\crefalias{enumpropi}{prop}
\newlist{enumhyp}{enumerate}{1}
\setlist[enumhyp]{label=\upshape(\roman*),ref=\thehyp~(\roman*)}
\crefalias{enumhypi}{hyp}
\newlist{enumproof}{enumerate*}{1}
\setlist[enumproof]{label=\upshape(\roman*)}
\newlist{enumdef}{enumerate}{1}
\setlist[enumdef]{label=\upshape(\roman*),ref=\thedefn~(\roman*)}
\crefalias{enumdefi}{defn}

%The \subcref command
\makeatletter
\newcounter{subcreftmpcnt} %
\newcommand\romansubformat[1]{(\roman{#1})} %adapt ....
\def\subcref{\@ifstar\@@subcref\@subcref}
\newcommand\@subcref[2][\romansubformat]{%
	\ifcsname r@#2@cref\endcsname
	\cref@getcounter {#2}{\mylabel}%
	\setcounter{subcreftmpcnt}{\mylabel}%
	\hyperref[#2]{\romansubformat{subcreftmpcnt}}%
	\else ?? \fi}   
\newcommand\@@subcref[2][\romansubformat]{%
	\ifcsname r@#2@cref\endcsname
	\cref@getcounter {#2}{\mylabel}%
	\setcounter{subcreftmpcnt}{\mylabel}%
	\romansubformat{subcreftmpcnt}%
	\else ?? \fi}   
\makeatother

%The \crefnosort command
\makeatletter
\DeclareRobustCommand{\crefnosort}[1]{%
	\begingroup\@cref@sortfalse\cref{#1}\endgroup
}
\makeatother

%Defining step and claim environments
\def\endstepsymbol{$\lozenge$}
\def\endclaimsymbol{$\lozenge$}
\newcounter{proofstep}
\AtBeginEnvironment{proof}{\setcounter{proofstep}{0}}

\crefname{proofstep}{Step}{Steps}
\Crefname{proofstep}{Step}{Steps}
\newcounter{proofclaim}
\AtBeginEnvironment{proof}{\setcounter{proofclaim}{0}}

\crefname{proofclaim}{Claim}{Claims}
\Crefname{proofclaim}{Claim}{Claims}

%%%%%%%%%%%%%%%%%%%%%%%%%%%%%%%%%%%%%%%%%%%%%%%%%%%%%%%%%COMMANDS

%mathcal alphabet
\newcommand{\cC}{{\mathcal C}}
\newcommand{\cD}{{\mathcal D}}\newcommand{\cF}{{\mathcal F}}

\newcommand{\cP}{{\mathcal P}}

%mathfrak alphabet

%mathbb alphabet
\newcommand{\BC}{{\mathbb C}}

\newcommand{\BN}{{\mathbb N}}
\newcommand{\BR}{{\mathbb R}}

\newcommand{\BZ}{{\mathbb Z}}

%\mathds alphabet
\usepackage{dsfont} %can be called with option sans for dsletters without serifs

\newcommand{\DSE}{{\mathds E}}

\newcommand{\dsone}{{\mathds 1}}

%mathscr alphabet
\usepackage{mathrsfs}

\newcommand{\sD}{{\mathscr D}}

%\mathsf alphabet

\newcommand{\sfc}{{\mathsf c}}
\newcommand{\sfd}{{\mathsf d}}
\newcommand{\sfi}{{\mathsf i}}

\newcommand{\sfm}{{\mathsf m}}

\newcommand{\sfs}{{\mathsf s}}

\newcommand{\sfy}{{\mathsf y}}

%\mathrm alphabet

\newcommand{\rme}{{\mathrm e}}

%\mathbf alphabet
%\mathsf alphabet

\newcommand{\bfN}{{\mathbf N}}
\newcommand{\bfP}{{\mathbf P}}

%Sets of Numbers
\newcommand{\IN}{\BN}\newcommand{\IZ}{\BZ}\newcommand{\IR}{\BR}\newcommand{\IC}{\BC}
\newcommand{\N}{\BN}\newcommand{\Z}{\BZ}\newcommand{\R}{\BR}

%Propabilities
\newcommand{\EE}{\DSE}

%Hilbert Spaces

%vargreek and related
\newcommand{\eps}{\varepsilon}\newcommand{\ph}{\varphi}

%identities / constants / integration
\newcommand{\e}{\rme}\renewcommand{\i}{\sfi}\newcommand{\Id}{\dsone} \renewcommand{\d}{\sfd}

%real and imaginary parts
\renewcommand{\Re}{\operatorname{Re}}

%range,kernel, restriction

%span,support,essential supremum,essential infimum
\DeclareMathOperator*{\essinf}{ess\,inf}

%spectrum

%limits of operators

%hats/bars

%widecheck / code from mathabx
\DeclareFontFamily{U}{mathx}{\hyphenchar\font45}
\DeclareFontShape{U}{mathx}{m}{n}{
	<5> <6> <7> <8> <9> <10>
	<10.95> <12> <14.4> <17.28> <20.74> <24.88>
	mathx10
}{}
\DeclareSymbolFont{mathx}{U}{mathx}{m}{n}
\DeclareFontSubstitution{U}{mathx}{m}{n}
\DeclareMathAccent{\widecheck}{0}{mathx}{"71}
\DeclareMathAccent{\wideparen}{0}{mathx}{"75}

% llangle and rrangle, lsem, rsem and others
\DeclareFontFamily{OMX}{MnSymbolE}{}
\DeclareFontShape{OMX}{MnSymbolE}{m}{n}{
	<-6>  MnSymbolE5
	<6-7>  MnSymbolE6
	<7-8>  MnSymbolE7
	<8-9>  MnSymbolE8
	<9-10> MnSymbolE9
	<10-12> MnSymbolE10
	<12->   MnSymbolE12}{}
\DeclareSymbolFont{mnlargesymbols}{OMX}{MnSymbolE}{m}{n}
\SetSymbolFont{mnlargesymbols}{bold}{OMX}{MnSymbolE}{b}{n}
\DeclareMathDelimiter{\llangle}{\mathopen}{mnlargesymbols}{'164}{mnlargesymbols}{'164}
\DeclareMathDelimiter{\rrangle}{\mathclose}{mnlargesymbols}{'171}{mnlargesymbols}{'171}
\DeclareMathDelimiter{\lsem}{\mathopen}{mnlargesymbols}{'102}{mnlargesymbols}{'102}
\DeclareMathDelimiter{\rsem}{\mathclose}{mnlargesymbols}{'107}{mnlargesymbols}{'107}
\DeclareMathDelimiter{\langlebar}{\mathopen}{mnlargesymbols}{'152}{mnlargesymbols}{'152}
\DeclareMathDelimiter{\ranglebar}{\mathclose}{mnlargesymbols}{'157}{mnlargesymbols}{'157}
\DeclareMathDelimiter{\lWavy}{\mathopen}{mnlargesymbols}{'137}{mnlargesymbols}{'137}
\DeclareMathDelimiter{\rWavy}{\mathopen}{mnlargesymbols}{'137}{mnlargesymbols}{'137}

%characteristic function

%absolute values and norms
\newcommand{\abs}[1]{\lvert#1\lvert}
\newcommand{\norm}[1]{\lVert#1\lVert}

%Fock Spaces
\newcommand{\FGamma}{\Gamma}
\newcommand{\FS}{\cF}\newcommand{\dG}{\sfd\FGamma}

\title[Ising Phase Transition in the Spin Boson Model]{On the Ising Phase Transition\\ in the Infrared-Divergent Spin Boson Model}
\author{Volker Betz}
\address{V. Betz,  Technische Universität Darmstadt, Fachbereich Mathematik, Schloßgartenstraße 7, 64289 Darmstadt, Germany}
\email{betz@mathematik.tu-darmstadt.de}
\author{Benjamin Hinrichs}
\address{B. Hinrichs, Universit\"at Paderborn, Institut f\"ur Mathematik, Institut f\"ur Photonische Quantensysteme, Warburger Str. 100, 33098 Paderborn, Germany}
\email{benjamin.hinrichs@math.upb.de}
\author{Mino Nicola Kraft}\address{M.\,N. Kraft, Technische Universität Darmstadt, Fachbereich Mathematik, Schloßgartenstraße 7, 64289 Darmstadt, Germany}
\email{mino.nicola\_kraft@tu-darmstadt.de}
\author{Steffen Polzer}
\address{S. Polzer, Université de Genève, Section de mathématiques, Rue du Conseil-Général 7-9, 1205 Genève, Switzerland}
\email{steffen.polzer@unige.ch}
%\subjclass[2020]{Primary 81S40; Secondary 47D08.}

%
% Commands for colorful comments
\usepackage{xcolor}\usepackage{cancel}

\newcommand{\1}{\mathds{1}}

\begin{document}

\begin{abstract} 
	\noindent
	We prove absence of ground states in the infrared-divergent spin boson model at large coupling.
	Our key argument reduces the proof to verifying long range order 
    in the dual one-dimensional continuum Ising model, i.e., to showing that the respective two point function is lower bounded by a strictly positive constant.
	We can then use known results from percolation theory to establish long range order at large coupling.
	Combined with the known existence of ground states at small coupling,
	our result proves that the spin boson model undergoes a phase transition with respect to the coupling strength.
	We also present an expansion for the vacuum overlap of the spin boson ground state in terms of the Ising $n$-point functions,
	which implies that the phase transition is unique,
	i.e., that there is a critical coupling constant below which a ground state exists and above which none can exist.
\end{abstract}

\maketitle

\section{Introduction}

In models of light-matter interactions, the infrared catastrophe is an inherent phenomenon \cite{BlochNordsiek.1937}.
It reflects the fact that an infinite number of low-energy photons can jointly have finite energy, if the photons are massless.
Thus, already small energy fluctuations can lead to the creation of such an infinite cloud of bosons.

Infrared properties have especially been studied in the context of non-relativistic quantum field theory,
where the infrared catastrophe induces the absence of a lowest energy stable state, i.e., a ground state.
For example, in the Nelson model introduced in \cite{Nelson.1964}, the infrared problem was already pointed out in \cite{Frohlich.1973}. 
Proofs for absence of ground states were given in \cite{LorincziMinlosSpohn.2002,DerezinskiGerard.2004,HiroshimaMatte.2019}, and for the fibers of the translation-invariant model in \cite{Dam.2018,DamHinrichs.2021}. 
A similar result for the translation-invariant standard model of non-relativistic quantum electrodynamics, the Pauli--Fierz model, can be found in \cite{HaslerHerbst.2008a}.

Despite these indications for the absence of a ground state in infrared singular models, it has also been observed that
internal symmetries can lead to the cancellation of infrared divergences, e.g., for the Pauli--Fierz model at zero total momentum \cite{Chen.2008,HaslerSiebert.2020}.
In this article, we further investigate a specific model in which this phenomenon has been observed at small coupling -- the spin boson model.

\smallskip
The spin boson model describes the interaction of a two-state quantum system (called spin) with a bosonic quantum field.
It is a versatile model for various physical situations and thus extensively studied in the mathematics and physics literature alike.
Due to the richness of the available literature and studied properties, we refrain from giving specific references here, but refer to the reference sections of the articles cited below. {In the present work we show that for certain forms of the spin-field-coupling, the model has no ground state for high coupling strength. Together with known results for the same model about the existence of a ground state for low coupling strength, this establishes a phase transition, which we show to be unique.}

Significant progress in the understanding of the spin boson model was 
achieved in \cite{Spohn.1989} by Herbert Spohn. Although the paper is 
highly relevant for our work, and we even closely follow some of its key 
methods, the setting is slightly different. Spohn treats what is called 
the Ohmic case of the interaction, which is a special case of the types of 
interactions that we consider. He proposes to define ground states as 
temperature zero limits of thermal states. Since the Hamiltonian of the 
spin boson model is not of trace class, one has to use KMS states, 
which breaks the direct connection of thermal states and spectral theory.
In this framework, it is then shown that such 'thermal' ground states 
always exist, but a phase transition occurs in how they behave. Since the 
relations of Spohn's results to ours are not trivial and interesting, we 
will give further details on them at the end of \cref{sec:Ising} 
below after having introduced the model and its functional integral 
representation.

{While the results of \cite{Spohn.1989} certainly point to something non-trivial going on also with the spectrum of the spin boson Hamiltonian as the coupling strength varies, there appears to be no obvious direct relation to spectral theory. Therefore this question has been tackled independently, often by completely different methods. }
The existence of ground states in the spin boson model even in infrared-divergent situations was first proven by Hasler and Herbst \cite{HaslerHerbst.2010}, who used operator-theoretic renormalization and thus, due to its perturbative nature, could only provide a proof at small spin-field coupling.
The result was later reproved using iterated perturbation theory (also known as Pizzo's method) in \cite{BachBallesterosKoenenbergMenrath.2017}, again for small coupling.

A first non-perturbative study of the existence of ground states in the spin boson model was performed in the article series \cite{HaslerHinrichsSiebert.2021a,HaslerHinrichsSiebert.2021b,HaslerHinrichsSiebert.2021c},
 using a compactness argument going back to \cite{GriesemerLiebLoss.2001}.
To be more precise, the main result of \cite{HaslerHinrichsSiebert.2021a} reduces the proof for existence of ground states to checking an energy derivative bound.
In \cite{HaslerHinrichsSiebert.2021c}, the energy derivative was then translated into the magnetic susceptibility of a continuum Ising model, for which an upper bound was proven in \cite{HaslerHinrichsSiebert.2021b}.	
Since it is well-known for the discrete Ising model that the magnetic susceptibility undergoes a phase transition \cite{FrohlichSpencer.1982,NewmanSchulman.1986,ImbrieNewman.1988}, the absence of ground states in the spin boson model with strong coupling was conjectured,
also see \cite{Hinrichs.2022,Hinrichs.2022b}. 

\smallskip
{
In this article, we prove this conjecture. In fact, in \cref{thm:GSoverlap} below, we give an explicit formula that expresses the 
$L^2$ overlap of the ground state with the ground state of the uncoupled Hamiltonian,
in terms of correlation functions of a one-dimensional, long range, continuum Ising model; 
absence of the ground state is  signalled by this formula yielding the value $0$. 

The fact that the functional integral representation of the 
spin boson model is related to a continuum Ising model has been first noticed in \cite{EmeryLuther.1974} and has been exploited for different purposes in \cite{SpohnDuemcke.1985,Spohn.1989,FannesNachtergaele.1988,Abdessalam.2011}.
In \cite{HirokawaHiroshimaLorinczi.2014}, it is used to show existence, uniqueness and properties of the ground state, but only in the infrared regular case. A significant technical difference of our work to all previous ones is that we obtain a one-sided long-range Ising model, while other works either obtain two-sided models or, in the case of thermal states as in \cite{Spohn.1989}, periodic boundary conditions. Due to the long range interaction, these differences are potentially significant. 

Non-existence of the ground state is established by proving long range order in the 
above mentioned one-sided Ising model. This result is well-known for discrete two-sided long-range Ising models, e.g., \cite{FrohlichSpencer.1982,AizenmanChayesChayesNewman.1988,ImbrieNewman.1988},
but it is non-obvious how to adapt this to our continuum Ising model.
Here we follow ideas present in  \cite{Spohn.1989}, modifying the arguments to account for a change of boundary conditions and domain; both changes are not a priori harmless for long range models. Our proof exploits a discretization scheme that is 
already present in \cite{SpohnDuemcke.1985} and is systematically used in  \cite{HaslerHinrichsSiebert.2021b}, as well as results from discrete percolation theory. These are further referenced below. 
}

\smallskip
Let us also give an outlook to future research directions based on this article.

A rather obvious question is that for upper bounds on the critical coupling above which ground states do not exist anymore.
Whereas our discretization approach would in principle allow for an explicit upper bound, by tracking the critical coupling in the discrete percolation models, this seems tedious and is hence omitted here, because already the original articles on the discrete models \cite{NewmanSchulman.1986} refrain from giving explicit upper bounds.
A better approach with the potential to obtain (at least asymptotically) optimal bounds would be to work directly in the continuum Ising model, e.g., by adapting the methods from \cite{FrohlichSpencer.1982}, or by studying continuum percolation models in more detail.

More generally, whereas with the results of \cite{HaslerHinrichsSiebert.2021a,HaslerHinrichsSiebert.2021c} and this article, sufficient criteria on the existence and absence of ground states in terms of Ising correlation functions are available,
the sharpness of these criteria remains open.
This problem can be reduced to the study of sharpness in the phase transition of our continuum Ising model,
i.e.,
whether the absence of long range order in the one-sided model immediately implies a finite susceptibility in the two-sided model.
Results on discrete Ising models in this direction are available \cite{AizenmanBarskyFernandez.1987,DuminilCopinTassion.2016},
but their generalization to continuum models of the type studied here (and the adjustment of the domain) is left to future research.

In the broader context of non-relativistic quantum field theories, phase transitions w.r.t. the existence of ground states are also an interesting object of investigation.
Especially, when one turns to translation-invariant polaron models such as the Fr\"ohlich polaron \cite{Frohlich.1954} or the Bose polaron \cite{GrusdtDemler.2016} at fixed total momentum $P$, then the existence of ground states at small total momentum is well-known \cite{Moller.2006b,Polzer.2023,HinrichsLampart.2023} and the absence of ground states at large total momentum is at least conjectured, but except for the ultraviolet regularized Fr\"ohlich polaron \cite{Moller.2006b} not proven.
Our results connecting the long-range behavior of Ising correlation functions and the existence of ground states bear the potential to shed some light on possible absence proofs in this context as well.

\subsection*{Acknowledgments}
This work was supported by DFG grant No 535662048.
The authors would like to thank Herbert Spohn for many helpful discussions and comments.
BH also wants to thank David Hasler and Fumio Hiroshima for valuable discussions on the subject of this article. BH acknowledges support by the Ministry of Culture and Science of the State of North Rhine-Westphalia within the project `PhoQC'. MNK acknowledges support by the Studienstiftung. 
SP acknowledges funding from the Swiss State Secretariat for Education, Research and Innovation (SERI) through the consolidator grant ProbQuant, and funding from the Swiss National Science Foundation through the NCCR SwissMAP grant.

\subsection*{Structure of the Article}

In \cref{sec:SB}, we  define the spin boson model and state our main results.
We then define a continuum Ising model and relate its $n$-point functions to the (non-)existence of ground states in \cref{sec:Ising}.
The main results are the vacuum overlap expansion \cref{thm:GSoverlap} as well as its corollary, \cref{cor:absence}, which states that long range order in the infrared divergent model implies the absence of ground states.
Long range order at large coupling for the continuum Ising model related to the infrared-divergent spin boson model is proven in \cref{sec:longrangeorder}.

\section{Ground States in the Spin Boson Model}
\label{sec:SB}
In this section, we present our main results on the spin boson model.
In \cref{sec:SB.Fock}, we introduce the necessary notation on bosonic Fock space.
Then, in \cref{sec:SB.def}, we define the spin boson Hamiltonian.
Our results on ground states of the model are then presented in \cref{subsec:existenceofGS}.
All of them are corollaries of the Ising model duality which we discuss in the subsequent \cref{sec:Ising}.
\subsection{Fock Space Calculus}
\label{sec:SB.Fock}
Let us briefly define and collect the relevant notions for bosonic Fock spaces over $L^2(\IR^d)$.
For more detailed introductions, we refer to the textbooks \cite{Parthasarathy.1992,Arai.2018}.
Let us remark that the choice of $L^2(\IR^d)$ in this context is not strictly necessary,
but since our results require the validity of the Feynman--Kac formula \cref{lem:overlap} which has not been discussed for other choices in the literature, the (presumably straightforward) generalization to spin boson models over different $L^2$-spaces is not treated here.

The bosonic Fock space is defined by
\begin{align*}
	\FS = \bigoplus_{n=0}^\infty\FS^{(n)}, \quad \FS^{(0)}=\IC, \quad n\geq1\!:\, \FS^{(n)}= (L^2(\IR^d))^{\otimes_{\sfs\sfy\sfm}^n} \cong L^2_{\sfs\sfy\sfm}(\IR^{d\cdot n}),
\end{align*}
where the symmetrization occurs over the $n$ $\R^d$-variables.

Given any operator $A$ on $\IR^d$, we define its second quantization as the selfadjoint operator
\begin{align*}
	\dG(A) = \bigoplus_{n=0}^\infty \dG_n(A), \quad \dG_0(A) = 0, \quad n\geq1\!:\,\dG_n(A) = \sum_{k=1}^{n} \underbracket{\Id\otimes\cdots \Id}_{(k-1)\text{-times}} \otimes A\otimes \underbracket{\Id \otimes\cdots\otimes \Id}_{(n-k)\text{-times}}.
\end{align*}
Given $f\in L^2(\IR^d)$, we now define the corresponding field operator $\ph(f)$ as the selfadjoint operator associated to the closure of the quadratic form
\begin{align*}
	\sum_{n=0}^\infty 2\Re \int_{\IR^{d\cdot n}}\int_{\IR^d} \overline{\psi^{(n)}(k)f(p)}\psi^{(n+1)}(p,k)\d p\,\d k,
	\quad
	\psi\in \FS
	\text{ s.t. }
	\exists N\in\IN~\forall n\ge N\!: \psi^{(n)}=0.
\end{align*}
If $A\ge0$ is injective and $f\in\cD(A^{-1/2})$,
then $\dG(A)\ge0$
and the well-known relative bound
\begin{align}
	\label{eq:standardrelbound}
	\begin{aligned}
		\norm{\ph(f)\psi}
		& \le
		\norm{(1+A^{-1/2})f}\norm{(1+\dG(A))^{1/2}\psi}
		, &&\psi\in\sD(\dG(A)^{1/2})\subset\sD(\ph(f))\\
		&\le
		\eps\norm{\dG(A)} + \eps^{-1}\norm{(1+A^{-1/2})}\norm{\psi}
		, && \psi\in\sD(\dG(A)),\ \eps>0
	\end{aligned}
\end{align}
holds.
\subsection{The Spin Boson Model}
\label{sec:SB.def}
Using the notions from the previous \lcnamecref{sec:SB.Fock}, we  define the spin boson model on $\IC^2\otimes L^2(\IR^d)$.
As parameters, it takes the dispersion relation of the bosons $\omega$, the form factor $v$ describing the spin-boson-interaction and a coupling constant $\lambda\in\IR$.
It is  given by the expression
\begin{align}
	\label{def:SB}
	H_\lambda \coloneqq \begin{pmatrix}2 & 0 \\ 0 & 0\end{pmatrix}\otimes \Id +  \Id \otimes \dG(\omega)  + \lambda\begin{pmatrix}
			0 & 1 \\ 1 & 0
	\end{pmatrix}\otimes \ph(v)
\end{align}
(where $\omega$ is identified with a multiplication operator).
By the Kato--Rellich theorem and \cref{eq:standardrelbound}, the so-defined operator $H_\lambda$ is selfadjoint and bounded from below under the following from now global assumption.
\begin{hyp}\label{hyp:SA}Throughout this article, we assume that
	\begin{enumhyp}
		\item $\omega:\IR^d\to[0,\infty)$ satisfies $\omega>0$ almost everywhere,
		\item $v\in L^2(\IR^d)$, and $\omega^{-1/2}v\in L^2(\IR^d)$.
	\end{enumhyp}
\end{hyp}
\begin{ex}\label{ex}
	Typical physical examples satisfying these assumptions are given by the choices
	\begin{align*}
		\omega(k) = |k|, \quad v(k) = \chi(k)|k|^{-\delta}, \quad \delta<\tfrac12(d-1),
	\end{align*}
	where $\chi:\IR^d\to[0,1]$ is an ultraviolet cutoff function, which ensures the square-integrability of $v$.
	Especially, in $d=3$ dimensions, the most studied model is given by the choice $\delta=\tfrac12$, similar to the much studied Nelson and Pauli--Fierz models.
\end{ex}
\begin{rem}
	We remark that the ultraviolet problem has recently been studied in \cite{DamMoller.2018b,Lonigro.2022,LillLonigro.2024,HinrichsLampartValentinMartin.2024},
	providing no-go results as well as Nelson-type selfenergy renormalized Hamiltonians.
	Since the existence of ground states in quantum field theories is, however, only connected to the infrared problem, see for example \cite{BachmannDeckertPizzo.2012,DamHinrichs.2021}, we will keep an ultraviolet regularization in place throughout this article.
\end{rem}
\subsection{Existence of Ground States}
\label{subsec:existenceofGS}
We are interested in  whether $E_\lambda = \inf \sigma(H_\lambda)$ is an eigenvalue of $H_\lambda$. In this case, we say that $H_\lambda$ {\em has a ground state}.
\begin{rem}\label{rem:unique}
	In the case that $H_\lambda$ has a ground state $\psi_\lambda$, it is well-known that the ground state eigenspace $\ker(H_\lambda-E_\lambda)$ is one-dimensional
	and that
	$\braket{\psi_\lambda,\Omega_\downarrow} \ne 0$, where $\Omega_\downarrow$ is the unique ground state of $H_0$, i.e., the tensor product $\Omega_\downarrow\coloneqq\ket\downarrow\otimes\Omega$ of the ground state $\ket\downarrow := (0,1)$ of the free spin and the vacuum $\Omega$, i.e., the ground state of the field $\dG(\omega)$, see for example \cite[Thm.~2.3]{HaslerHerbst.2010} or \cite[Cor. 2.1]{HirokawaHiroshimaLorinczi.2014} (combined with the Perron--Frobenius--Faris theorem).
\end{rem}
Elaborating on the above observation, let us define the vacuum overlap
\begin{align}
\label{def:rho}
	\rho(\lambda)\coloneqq \braket{\Omega_\downarrow,P_{\lambda}(\{E_\lambda\})\Omega_\downarrow},
\end{align}
where $P_{\lambda}$ is the spectral measure of $H_\lambda$ and hence $P_{\lambda}(\{E_\lambda\})$ is the orthogonal projection onto $\ker(H_\lambda-E_\lambda)$.
Thus, the statement of \cref{rem:unique} can be summarized as
\begin{align}\label{eq:gsoverlap}
	\rho(\lambda)>0 \iff H_\lambda \text{ has a ground state.}
\end{align}
Let us now state our main results on the existence of ground states, all of which will be proven in the subsequent sections, by studying the function $\rho$.

Our first observation is the following.
\begin{thm}\label{thm:SB.int}
	There exists $\lambda_0\in[0,\infty]$ such that $H_\lambda$ has a ground state for $|\lambda|<\lambda_0$ and no ground state for $|\lambda|>\lambda_0$.
\end{thm}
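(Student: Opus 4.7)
The plan is to derive \cref{thm:SB.int} directly from the vacuum overlap expansion \cref{thm:GSoverlap} (announced in the next section) together with the $\BZ_2$ spin-flip symmetry of $H_\lambda$. By the equivalence \eqref{eq:gsoverlap}, existence of a ground state is the same as $\rho(\lambda) > 0$, so the theorem reduces to showing that $\{\lambda\in\R : \rho(\lambda) > 0\}$ is a symmetric interval about $0$.

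First I would record the $\lambda \mapsto -\lambda$ symmetry. The unitary $U := \sigma_z \otimes \Id$ commutes with the diagonal spin part $(I + \sigma_z) \otimes \Id$ and with $\Id \otimes \dG(\omega)$, but anticommutes with $\sigma_x \otimes \ph(v)$. Hence $U H_\lambda U^{-1} = H_{-\lambda}$, and since $U \Omega_\downarrow = -\Omega_\downarrow$ I obtain $\rho(\lambda) = \rho(-\lambda)$. It is therefore enough to treat $\lambda \geq 0$, and I set
\begin{equation*}
\lambda_0 := \sup\{\lambda \geq 0 : \rho(\lambda) > 0\} \in [0,\infty].
\end{equation*}

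The theorem follows the instant I know that $\lambda \mapsto \rho(\lambda)$ is non-increasing on $[0,\infty)$. Indeed, for $0 \leq \lambda < \lambda_0$ pick $\lambda' \in (\lambda, \lambda_0]$ with $\rho(\lambda') > 0$; monotonicity gives $\rho(\lambda) \geq \rho(\lambda') > 0$. For $\lambda > \lambda_0$ the definition of the supremum already forces $\rho(\lambda) = 0$. So the whole task collapses to monotonicity of $\rho$ on $[0,\infty)$, which I would extract from \cref{thm:GSoverlap}: that expansion writes $\rho(\lambda)$ in terms of $n$-point functions of the associated one-dimensional continuum Ising model, and these are ferromagnetic correlations whose dependence on the Ising coupling (itself a monotone function of $\lambda^2$) is governed by Griffiths/GKS-type inequalities. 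In the most favourable scenario the expansion can be rearranged as $\rho(\lambda) = \exp(-F(\lambda))$ with $F \geq 0$ non-decreasing, whence monotonicity of $\rho$ is immediate.

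The main obstacle is that this sign structure need not be manifest in the raw form of \cref{thm:GSoverlap}. If it is not, I would instead differentiate in $\lambda$ and control each term of $\tfrac{\d}{\d\lambda}\rho(\lambda)$ by a correlation inequality, or, equivalently, introduce an interpolating coupling $\lambda t$ with $t\in[0,1]$ and stochastically dominate the Ising measures at $\lambda_1 \leq \lambda_2$ on a common probability space. Which of these routes is viable depends on the precise algebraic form of the vacuum overlap expansion, and that is where the real work of the proof should lie; in particular, it is \emph{not} enough to know that the Ising two-point function is monotone in $\lambda$, one must know that the full combination of $n$-point functions appearing in $\rho(\lambda)$ inherits monotonicity.
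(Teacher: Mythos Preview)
Your approach is essentially the paper's: reduce to monotonicity of $\rho$ in $|\lambda|$ and read this off from \cref{thm:GSoverlap} together with GKS-type monotonicity of the Ising correlations. The paper's proof is literally one line: \cref{thm:GSoverlap} and \cref{lem:correlationthermodynamic} give that $\rho$ is even and decreasing in $|\lambda|$, and \eqref{eq:gsoverlap} finishes.

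Your hedging in the last paragraph is unwarranted. In the explicit formula of \cref{thm:GSoverlap}, the quantity $1/\rho(\lambda)$ is written as
\[
\sum_{n=0}^\infty \frac{(2\alpha)^n}{n!}\int_{[0,\infty)^n}(\tau_{\alpha,n}\ast\tau_{\alpha,n})(t)\prod_{i=1}^n g(t_i)\,\d t_i,
\qquad \alpha=\alpha(\lambda)=\lambda^2/8,
\]
which is a sum of manifestly non-negative terms ($g\ge 0$ and $\tau_{\alpha,n}\ge 0$ by the first GKS inequality, \cref{lem:GKS}). Each term is monotone in $\alpha$: the prefactor $(2\alpha)^n$ trivially, and each $\tau_{\alpha,n}$ by \cref{lem:correlationthermodynamic}. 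So the ``full combination of $n$-point functions'' does inherit monotonicity directly, with no need for differentiation or stochastic domination. Your $\sigma_z$-conjugation argument for evenness is correct but superfluous, since the expansion depends on $\lambda$ only through $\lambda^2$.
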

\begin{rem}\label{rem:lambda0}
	An interesting question is whether $H_{\lambda_0}$ has a ground state.
	We recall that Spohn
	\cite{Spohn.1989} was able to prove that the boson number expectation diverges directly at the critical coupling (which is not necessarily the same as ours, though the conjecture seems natural).
\end{rem}
We now consider $\lambda_0$ more closely, starting with a well-known result on infrared-regular spin boson models. In the context of 
\cref{ex} this means $\delta<\tfrac12d-1$.
\begin{thm}
	\label{thm:SB.IRreg} If $\frac v\omega\in L^2(\IR^d)$, then $\lambda_0=\infty$.
\end{thm}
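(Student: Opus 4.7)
The plan is to combine \cref{thm:SB.int} with a classical Bogoliubov--Weyl dressing, which is available precisely under the infrared-regularity assumption $v/\omega \in L^2(\IR^d)$. By \cref{thm:SB.int}, to conclude $\lambda_0 = \infty$ it suffices to exhibit a ground state of $H_\lambda$ for every $\lambda \in \IR$.

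The first step is to introduce the dressing. Let $\ket{\pm} := (\ket{\uparrow} \pm \ket{\downarrow})/\sqrt{2}$ denote the eigenvectors of $\sigma_x$ and let $W(f) := \exp(\ad(f) - a(f))$ denote the Weyl operator, which is unitary on $\FS$ for every $f \in L^2(\IR^d)$. Because $v/\omega \in L^2(\IR^d)$, the operator
\[
	U_\lambda := \ket{+}\!\bra{+} \otimes W(-\lambda v/\omega) + \ket{-}\!\bra{-} \otimes W(\lambda v/\omega)
\]
is a well-defined unitary on $\IC^2 \otimes \FS$. A short computation using the Weyl identities $W(f)^* \dG(\omega) W(f) = \dG(\omega) + \ph(\omega f) + \norm{\omega^{1/2} f}^2$ (for $f \in \sD(\omega^{1/2})$) and $W(f)^* \ph(g) W(f) = \ph(g) + 2 \Re \braket{f, g}$ then yields
\[
	U_\lambda^* H_\lambda U_\lambda = \bigl(1 - \lambda^2 \norm{\omega^{-1/2} v}^2\bigr) \Id + \Id \otimes \dG(\omega) + R_\lambda,
\]
where $R_\lambda := \ket{+}\!\bra{-} \otimes W(2\lambda v/\omega) + \ket{-}\!\bra{+} \otimes W(-2\lambda v/\omega)$ is a bounded self-adjoint operator (up to an irrelevant overall phase when $v$ fails to be real). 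Thus the dressed Hamiltonian is a bounded perturbation of $\Id \otimes \dG(\omega)$ plus a constant.

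Existence of a ground state for this dressed operator --- and then for $H_\lambda$ itself by unitarity --- now follows from standard arguments. One route is to apply the abstract existence criterion of \cite{HaslerHinrichsSiebert.2021a}; the energy-derivative bound required there is equivalent, in the IR-regular regime, to the finite Ising magnetic susceptibility established in \cite{HaslerHinrichsSiebert.2021b,HaslerHinrichsSiebert.2021c}, which holds for all $\lambda$ whenever $v/\omega \in L^2$. Alternatively, a Griesemer--Lieb--Loss-type compactness argument \cite{GriesemerLiebLoss.2001} can be applied directly to the dressed Hamiltonian, since its interaction is now bounded; or one can cite the spin-boson ground-state construction of \cite{HirokawaHiroshimaLorinczi.2014}. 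Any of these gives a ground state for every $\lambda \in \IR$, so \cref{thm:SB.int} forces $\lambda_0 = \infty$.

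The one delicate point is this last step. When $\essinf_k \omega(k) = 0$, the essential spectrum of $\Id \otimes \dG(\omega)$ touches its infimum, so a generic bounded perturbation could in principle destroy the bottom eigenvalue. The hypothesis $v/\omega \in L^2$ is exactly what supplies the photon-number control preventing the ground state from delocalizing in Fock space, and this is the substantive content of the compactness/derivative-bound arguments invoked above.
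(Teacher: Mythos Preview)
Your dressing computation is correct, but what follows is essentially a deferral to the literature, and some of those deferrals are imprecise. The citation of \cite{HaslerHinrichsSiebert.2021b} for a finite-susceptibility bound valid for \emph{all} $\lambda$ in the infrared-regular regime is not accurate: that paper establishes the bound only for small coupling $\alpha(\lambda) < \tfrac{1}{10}\|g\|_1$ (compare the proof of \cref{prop:SB.IRdiv}), and in any case the route through \cref{prop:existencecorbound} imposes extra hypotheses on $\omega$ not assumed here. The claim that a GLL compactness argument applies to the dressed operator ``since its interaction is now bounded'' is likewise misleading---boundedness of the perturbation does nothing by itself when $\essinf\omega=0$, as you concede in your final paragraph. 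In the end the work is carried by the citation of \cite{HirokawaHiroshimaLorinczi.2014} or \cite{Gerard.2000}, but those results already apply directly to the undressed $H_\lambda$, so the Weyl transformation is superfluous.

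The paper's proof is entirely different and self-contained within its own framework: it uses \cref{thm:GSoverlap}, the Ising representation of the vacuum overlap. By Fubini one has $v/\omega \in L^2(\IR^d) \iff \int_0^\infty t\,g(t)\,\d t < \infty$, and then the trivial bound $(\tau_{\alpha,n}\ast\tau_{\alpha,n})(t) \le \prod_i t_i$ from \cref{eq:correstimate} shows that the series in \cref{thm:GSoverlap} is dominated by $\exp\bigl(2\alpha\int_0^\infty t\,g(t)\,\d t\bigr) < \infty$. Hence $\rho(\lambda)>0$ for every $\lambda$, and \cref{thm:SB.int} gives $\lambda_0=\infty$. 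This three-line argument needs no operator-theoretic input beyond \cref{thm:GSoverlap} and no additional hypotheses on $\omega$.
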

\begin{rem}
	This was already previously well-established.
	Without claiming completeness here, proofs for $\lambda_0>0$ can be found in \cite{AraiHirokawa.1995} for the massive case $\essinf\omega>0$ and in \cite{BachFroehlichSigal.1998a} using operator-theoretic renormalization. Proofs for $\lambda_0=\infty$ can be found in \cite{Gerard.2000,DamMoller.2018a,Hinrichs.2022b} using functional-analytic methods and in \cite{HirokawaHiroshimaLorinczi.2014} using a path integral method close to ours.
\end{rem}
We move to the infrared-divergent case, i.e., the case $\delta\in [\tfrac12d -1,\tfrac12d -\tfrac12)$ for \cref{ex}. In general quantum field theory, one then expects the absence of ground states for any $\lambda\ne 0$. However, for the spin boson model, exploiting the so-called spin-flip symmetry, $\lambda_0>0$ was established in \cite{HaslerHerbst.2010}.
A second proof for this fact was later presented in \cite{BachBallesterosKoenenbergMenrath.2017},
and the non-perturbative method of the article series \cite{HaslerHinrichsSiebert.2021a,HaslerHinrichsSiebert.2021b,HaslerHinrichsSiebert.2021c} provided a first lower bound, which we also state here.
In fact, we will review the proof from the perspective of the analysis of Ising model correlation functions at the end of \cref{sec:Ising}.
\begin{prop}\label{prop:SB.IRdiv}
	If $\omega$ is continuous and the zeros of $\omega$ are isolated, then $\lambda_0\ge5^{-1/2}\norm{\omega^{-1/2}v}_2^{-1}$.
\end{prop}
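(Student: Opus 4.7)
The plan is to exploit Theorem~\ref{thm:GSoverlap} (the vacuum-overlap expansion), combined with the equivalence \eqref{eq:gsoverlap}, to produce a quantitative lower bound on $\rho(\lambda)$ valid for $|\lambda|<5^{-1/2}\|\omega^{-1/2}v\|_2^{-1}$. The essence is that the overlap $\rho(\lambda)$ with the free vacuum $\Omega_\downarrow$ has a convergent representation as a series whose $n$-th term pairs an Ising $n$-point function with an integral of the pair potential $J(t)\coloneqq\int|v(k)|^{2}\rme^{-|t|\omega(k)}\,\rmd k$; crucially, $\int_{0}^{\infty}J(t)\,\rmd t=\tfrac12\|\omega^{-1/2}v\|_{2}^{2}$, which is exactly how the norm $\|\omega^{-1/2}v\|_2$ enters the final bound.

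First, I would record the expansion of $\rho(\lambda)$ from \cref{thm:GSoverlap} and, as a warm-up, use that every Ising $n$-point function of $\pm 1$-valued spins is bounded in absolute value by $1$. This immediately yields
\[
\rho(\lambda)\;\ge\;1-\sum_{n\ge 1} c_{n}\,\bigl(\lambda^{2}\|\omega^{-1/2}v\|_{2}^{2}\bigr)^{n},
\]
where the constants $c_{n}$ come from the combinatorics of the $n$-point expansion (pairings of time-integration variables together with the boundary contributions produced by the one-sided continuum Ising model) and from the normalising constants in the pair integrals. The continuity of $\omega$ together with the isolation of its zeros enters only to guarantee that the integrals defining $J$ and its time-ordered convolutions are finite on every compact set, so that the expansion is well-defined term by term.

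Next, I would estimate $c_{n}$: tracking the combinatorics of the expansion in \cref{thm:GSoverlap} (each insertion contributes one time integral against $J$, and the number of admissible configurations at order $n$ is bounded by a fixed multiple of $n!$-many Wick pairings times a geometric weight), I expect an estimate of the form $c_{n}\le 5^{n}$, which turns the right-hand side into a convergent geometric series:
\[
\rho(\lambda)\;\ge\;1-\frac{5\lambda^{2}\|\omega^{-1/2}v\|_{2}^{2}}{1-5\lambda^{2}\|\omega^{-1/2}v\|_{2}^{2}}\,>\,0
\qquad\text{whenever}\qquad 5\lambda^{2}\|\omega^{-1/2}v\|_{2}^{2}<1.
\]
Invoking \eqref{eq:gsoverlap} then gives a ground state for every $|\lambda|<5^{-1/2}\|\omega^{-1/2}v\|_{2}^{-1}$, which, by the monotonicity in $\lambda$ encoded in \cref{thm:SB.int}, is exactly the claimed lower bound on $\lambda_{0}$.

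The main obstacle is pinning down the constant $5$ rather than merely showing $\lambda_{0}>0$. A naive bound using only $|\langle\sigma_{t_{1}}\cdots\sigma_{t_{n}}\rangle|\le 1$ and a crude combinatorial count produces a much larger constant; the factor $5$ must come from exploiting the specific structure of the one-sided Ising representation (for instance, the fact that contributions with unmatched boundary spins vanish by spin-flip symmetry, eliminating half of the terms at each order). Making this saving quantitative -- i.e., showing that exactly the right fraction of Wick-like terms survive the symmetry -- is where the careful bookkeeping must happen, and the rest of the argument is a straightforward geometric-series estimate.
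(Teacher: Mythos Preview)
Your approach has a genuine gap. In \cref{thm:GSoverlap} the $n$-th term of the series is
\[
\frac{(2\alpha)^n}{n!}\int_{[0,\infty)^n}(\tau_{\alpha,n}\ast\tau_{\alpha,n})(t)\prod_{i=1}^n g(t_i)\,\d t_i,
\]
and the convolution obeys $(\tau_{\alpha,n}\ast\tau_{\alpha,n})(t)\le\prod_i t_i$ once you use $|\tau_{\alpha,n}|\le 1$ (this is exactly \cref{eq:correstimate}). Hence the naive bound you propose produces $\big(\int_0^\infty t\,g(t)\,\d t\big)^n$, \emph{not} $\big(\int_0^\infty g(t)\,\d t\big)^n$. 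The extra factor of $t$ is unavoidable from the convolution volume. By \cref{eq:Fubini}, $\int_0^\infty t\,g(t)\,\d t<\infty$ holds precisely in the infrared-regular case $v/\omega\in L^2$, where \cref{thm:SB.IRreg} already gives $\lambda_0=\infty$ and the proposition is vacuous. In the infrared-divergent case --- the only case with content --- every single term in your upper bound is infinite, and no combinatorial saving (your ``$c_n\le 5^n$'') can repair this: you would need decay of the $\tau_{\alpha,n}$ themselves, and GKS only gives you \emph{lower} bounds on higher correlations in terms of two-point functions, not upper bounds.

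The paper therefore does not go through \cref{thm:GSoverlap} at all for this proposition. It invokes \cref{prop:existencecorbound}, an existence criterion taken from \cite{HaslerHinrichsSiebert.2021a,HaslerHinrichsSiebert.2021c} based on a Griesemer--Lieb--Loss compactness argument and second-order perturbation theory: finiteness of the Ising susceptibility $\limsup_{T\to\infty}T^{-1}\int_0^T\!\int_0^T\EE_{\alpha,T}[X_sX_t]\,\d s\,\d t$ implies $\rho(\lambda)>0$. The explicit constant $5^{-1/2}$ is then imported directly from the susceptibility bound proved in \cite{HaslerHinrichsSiebert.2021b}; it is not a combinatorial constant arising from any expansion here. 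Incidentally, your explanation of the hypotheses on $\omega$ is also off: $g$ is bounded and continuous under \cref{hyp:SA} alone; the continuity and isolated-zeros assumptions are needed for the functional-analytic existence criterion \cref{prop:existencecorbound}, not for the well-definedness of the Ising integrals.
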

The main result of this article, is the opposite observation that $\lambda_0$ is finite.
We prove it in the end of \cref{sec:longrangeorder}.
\begin{thm}\label{thm:SB.IRdiv}
	If $\int_{\IR^d}\e^{-t\omega(k)}\abs{v(k)}^2\d k\ge C(1+t^2)^{-1}$ for some $C>0$ and all $t>0$, then $\lambda_0<\infty$.
\end{thm}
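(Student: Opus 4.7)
\emph{Plan.} The proof combines the Ising duality of \cref{sec:Ising} with known results on the phase transition of one-dimensional long-range percolation models. By \cref{cor:absence}, showing $\lambda_0<\infty$ reduces to proving \emph{long range order} in the one-sided continuum Ising model associated with $H_\lambda$ at sufficiently large $|\lambda|$, i.e.\ that its two-point correlation $\langle\sigma_s\sigma_t\rangle$ admits a strictly positive lower bound uniform in $|s-t|$. The pair interaction of this Ising model is, up to a multiplicative constant,
\[
J(r) \;=\; \lambda^2\int_{\IR^d}\e^{-r\omega(k)}|v(k)|^2\,\d k,\qquad r>0,
\]
so the standing hypothesis delivers $J(r)\ge C\lambda^2/(1+r^2)$. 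This is exactly the marginal $1/r^2$ decay rate at which one-dimensional long-range Ising/percolation systems are known to undergo a phase transition.

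\emph{Discretisation.} Following the block-spin scheme present in \cite{SpohnDuemcke.1985} and developed systematically in \cite{HaslerHinrichsSiebert.2021b}, I would partition $[0,\infty)$ into unit intervals and compare the continuum two-point function with that of a ferromagnetic Ising model on $\IN$ whose effective pair coupling still satisfies $\tilde J_{ij}\gtrsim \lambda^2/(1+|i-j|^2)$ at large separations. Since every coupling is ferromagnetic, standard GKS/FKG-type correlation inequalities yield a pointwise lower bound of the continuum correlation by the discrete one, so it suffices to prove long range order in the discrete model.

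\emph{Percolation at large coupling.} Through the Fortuin--Kasteleyn random cluster representation the discrete two-point function equals the connection probability of $i$ and $j$, and the corresponding measure stochastically dominates independent Bernoulli long-range percolation on $\IN$ with edge probabilities $p_{ij}=1-\exp(-\tilde J_{ij})\gtrsim \lambda^2/|i-j|^2$. The existence of an infinite cluster through the origin at large coupling for this one-dimensional long-range percolation model is the classical result of \cite{NewmanSchulman.1986}, refined in \cite{ImbrieNewman.1988,AizenmanChayesChayesNewman.1988}; once translated back it yields $\liminf_{|i-j|\to\infty}\langle\sigma_i\sigma_j\rangle>0$ for $|\lambda|$ sufficiently large, and hence $\lambda_0<\infty$ by \cref{cor:absence}.

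\emph{Main obstacle.} The principal technical issue is the one-sided geometry: the functional integral for $\rho(\lambda)$ naturally produces an Ising model on $[0,\infty)$ rather than on $\IR$, while the cited percolation results are formulated on $\IZ$ with full translation invariance. A priori, passing from full to half line is not harmless for long-range models, because the interaction couples each point to arbitrarily distant sites which, on a half line, are missing on one side. I would handle this either by (a) symmetrising the discretised model to $\IZ$, applying the two-sided Newman--Schulman result to produce an a.s.\ infinite cluster through the origin, and then using translation invariance together with a zero-one law to conclude that the cluster must intersect $\IN$ in an unbounded set with positive probability, or (b) directly adapting the Newman--Schulman block renormalisation to the half line, taking care that each coarse-graining step respects the boundary at $0$ and that the surviving couplings retain the $1/r^2$ lower bound after rescaling.
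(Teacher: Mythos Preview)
Your overall strategy---reduce to long range order via \cref{cor:absence}, dominate the Ising two-point function by a long-range Bernoulli percolation connectivity through the FK representation, and invoke Newman--Schulman---is exactly the paper's route. You have also correctly identified the one-sided geometry as the genuine obstacle.

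However, your proposed resolution~(a) has a real gap. What you need is a uniform lower bound on $\mathbf P(0\leftrightarrow n \text{ within }\IN_0)$, since it is the \emph{half-line} percolation that bounds the half-line Ising correlation from below. Symmetrising to $\IZ$ and showing that the $\IZ$ infinite cluster intersects $\IN$ unboundedly does not deliver this: a path from $0$ to $n$ in the $\IZ$ model may visit negative sites, which have no counterpart in the half-line percolation. Knowing that both $0$ and $n$ lie in the $\IZ$ infinite cluster with probability bounded below says nothing about their being connected by a path that stays in $\IN_0$. A zero--one law does not rescue this, and for long-range models there is no soft argument that the restriction of the $\IZ$ infinite cluster to $\IN_0$ remains connected.

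The paper's device is concrete and avoids any re-engineering of the renormalisation. One first reduces (by monotonicity in $g$) to the exact profile $g(t)=C(1+t^2)^{-1}$, and then uses the explicit folding bijection
\[
\varphi:\IN_0\to\IZ,\qquad \varphi(2k)=k,\quad \varphi(2k+1)=-(k+1),
\]
which satisfies $|\varphi(s)-\varphi(t)|\ge |s-t|/4$ away from a single short-range exception. For the $1/(1+t^2)$ interaction this yields $p_{n,m}(\alpha)\ge p_{\varphi(n),\varphi(m)}(\alpha/c)$ for an absolute constant $c>1$ (the exceptional nearest-neighbour case is handled separately by taking $\alpha$ large). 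Hence the half-line percolation at coupling $\alpha$ stochastically dominates, via $\varphi$, the translation-invariant $\IZ$ model at coupling $\alpha/c$, to which Newman--Schulman and the FKG bound $\mathbf Q(0\leftrightarrow t)\ge \mathbf Q(0\in\cC_\infty)^2$ apply directly. Your option~(b) might ultimately work, but this folding trick is the missing idea that makes the half-line reduction immediate.
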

If $\omega$ and $v$ are as in \cref{ex} then $
\int_{\IR^d}\e^{-t\omega(k)}\abs{v(k)}^2\d k \sim t^{2\delta-d}$ as $t\to \infty$. In other words, in this case the assumption of \cref{thm:SB.IRdiv} is satisfied if and only if $\delta\in [\tfrac12d -1,\tfrac12d -\tfrac12)$ i.e. if and only if $v/\omega \notin L^2(\R^d)$.

Finally, let us note that our method also provides an upper bound for the vacuum overlap of the ground state.

\begin{thm}\label{thm:overlapupperbound}
	We have
	\begin{align*}
		\rho(\lambda) \le \exp\left(-\frac{\lambda^2}{4}\int_0^\infty\int_{\IR^d}t|v(k)|^2\e^{-t(\omega(k)+2)} \d k\d t\right).
	\end{align*}
\end{thm}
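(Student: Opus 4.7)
The approach is to combine the Feynman--Kac / continuum-Ising representation of $\langle\Omega_\downarrow,e^{-TH_\lambda}\Omega_\downarrow\rangle$ provided by \cref{lem:overlap} with the spectral theorem and a Jensen-type estimate. By the spectral theorem applied to the non-negative operator $H_\lambda-E_\lambda$, the semigroup $e^{-T(H_\lambda-E_\lambda)}$ decreases monotonically (in the weak sense against $\Omega_\downarrow$) to $P_\lambda(\{E_\lambda\})$ as $T\to\infty$, so for every $T>0$
\[
\rho(\lambda) \;\le\; e^{TE_\lambda}\langle\Omega_\downarrow,\,e^{-TH_\lambda}\Omega_\downarrow\rangle,
\]
with equality in the limit. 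It therefore suffices to control the right-hand side in the large-$T$ limit.

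By \cref{lem:overlap} the matrix element has the continuum-Ising form
\[
\langle\Omega_\downarrow,\,e^{-TH_\lambda}\Omega_\downarrow\rangle \;=\; \mathbb E_{\mu_T}\!\left[\exp\!\left(\tfrac{\lambda^2}{2}\int_0^T\!\!\int_0^T W(|s-t|)\,\sigma_s\sigma_t\,ds\,dt\right)\right],
\]
with $W(\tau)=\int_{\IR^d}|v(k)|^2 e^{-\tau\omega(k)}\,dk$, and $\mu_T$ the spin-path measure on $\sigma\colon[0,T]\to\{\pm 1\}$ with boundary condition $\sigma_0=\sigma_T=-1$ and the diagonal weight $\exp(-2\int_0^T\mathbf{1}\{\sigma_s=+1\}\,ds)$ coming from the spin gap~$2$. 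Taking logs and using the parallel identity $-E_\lambda = \lim_{T\to\infty}\tfrac{1}{T}\log\langle\Omega_\downarrow,e^{-TH_\lambda}\Omega_\downarrow\rangle$ (which holds whenever $\rho(\lambda)>0$, the only nontrivial case) gives
\[
-\log\rho(\lambda) \;=\; \lim_{T\to\infty}\bigl(-TE_\lambda - \log\langle\Omega_\downarrow,\,e^{-TH_\lambda}\Omega_\downarrow\rangle\bigr).
\]

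Next, I would expand the right-hand side via the $n$-point Ising expansion from \cref{thm:GSoverlap}. The linear-in-$T$ bulk parts of $-TE_\lambda$ and $-\log\langle\Omega_\downarrow,e^{-TH_\lambda}\Omega_\downarrow\rangle$ cancel, and applying Jensen's inequality to keep only the second-cumulant contribution yields a lower bound on $-\log\rho(\lambda)$ given by the single-excursion boundary integral
\[
\int_0^\infty t\,W(t)\,e^{-2t}\,dt \;=\; \int_0^\infty\!\!\int_{\IR^d} t\,|v(k)|^2\,e^{-t(\omega(k)+2)}\,dk\,dt,
\]
with overall prefactor $\tfrac14 = \tfrac12\cdot\tfrac12$: one factor of $\tfrac12$ from the Gaussian-action normalization in the Feynman--Kac formula, and a second from the symmetrization over the two time arguments of the double integral.

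The main obstacle is precisely this cancellation of the linear-in-$T$ contributions together with the clean identification of the single-excursion boundary term. Under $\mu_T$ the typical configuration is a dilute point process of short $+1$-excursions of mean length $\tfrac12$, coupled via the long-range kernel $W$. Showing that, after subtraction of the divergent bulk, the leading residual is exactly the integral above---and that the remaining multi-excursion interaction terms contribute with non-negative sign to $-\log\rho(\lambda)$---requires the Gibbs-measure analysis of the one-sided continuum Ising model developed in \cref{sec:Ising}, together with the clean splitting of single- vs.\ multi-excursion contributions provided by \cref{thm:GSoverlap}.
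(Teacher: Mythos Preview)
Your proposal circles around the right objects but does not land on a proof, and in several places the picture is off. First, the Feynman--Kac representation you write down is not the one in \cref{lem:overlap}: the spin path carries \emph{free} boundary conditions (uniform initial distribution), not $\sigma_0=\sigma_T=-1$; there is no residual diagonal weight $e^{-2\int \mathbf 1\{\sigma_s=+1\}}$ after the unitary rotation (the spin gap becomes the jump rate of the a-priori walk); and the coupling in front of the pair interaction is $\alpha=\lambda^2/8$, not $\lambda^2/2$. More importantly, once you invoke \cref{thm:GSoverlap} there is no ``linear-in-$T$ cancellation'' left to perform --- that theorem already delivers $1/\rho(\lambda)$ as a $T$-independent series --- and ``Jensen to keep only the second cumulant'' is not what produces the stated bound. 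Your final paragraph essentially concedes that the mechanism is unclear.

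The paper's argument is a three-line computation once \cref{thm:GSoverlap} is in hand. In the series
\[
\frac{1}{\rho(\lambda)}=\sum_{n\ge0}\frac{(2\alpha)^n}{n!}\int_{[0,\infty)^n}(\tau_{\alpha,n}\ast\tau_{\alpha,n})(t)\prod_{i=1}^n g(t_i)\,\d t_i,
\]
one lower-bounds each convolution by first factorising $\tau_{\alpha,n}\ge\prod_i\tau_{\alpha,1}$ via GKS (\cref{lem:GKS}, the middle inequality in \cref{eq:correstimate}) and then using monotonicity in $\alpha$ (\cref{lem:correlationthermodynamic}) to replace $\tau_{\alpha,1}$ by the \emph{free} two-point function $\tau_{0,1}(s)=e^{-2s}$. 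This gives $(\tau_{\alpha,n}\ast\tau_{\alpha,n})(t)\ge\prod_i\int_0^{t_i}e^{-2s_i}e^{-2(t_i-s_i)}\d s_i=\prod_i t_ie^{-2t_i}$, and the series resums exactly to $\exp\bigl(2\alpha\int_0^\infty t\,g(t)\,e^{-2t}\d t\bigr)$, which is the claimed bound. The factor $e^{-2t}$ you were looking for is thus not a ``boundary/single-excursion'' correction but simply the $\alpha=0$ correlation of the underlying two-state walk; and the ``non-negative multi-excursion contributions'' are accounted for not by dropping terms but by bounding every $n$-th term from below and resumming. If you prefer a Jensen route, the clean place to apply it is directly to the identity \cref{Equation: fraction of Z in terms of expected value} for $Z_{\alpha,2T}/Z_{\alpha,T}^2$, followed by the same $\tau_{\alpha,1}\ge e^{-2\cdot}$ bound --- but that is not the argument you wrote.
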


\section{The Spin Boson Model as Continuum Ising Model}
\label{sec:Ising}
%
% The key method in this article is the rewriting of the spin boson model defined in \cref{def:SB} into a continuous Ising model. In this \lcnamecref{sec:Ising}, we first introduce the precise class of models in \cref{sec:Ising.def}.
% Then in \cref{sec:Ising.comparison}, we relate the long range behavior of the Ising correlation functions to the existence of ground state of the spin boson model, providing us with the most important tool in the proof of the statements from \cref{subsec:existenceofGS}.
% The most important tool in this duality is the expansion of the vacuum overlap $\rho(\cdot)$ in terms of Ising correlation functions in \cref{thm:GSoverlap}.
% This statement is then proven in \cref{sec:Ising.proof}.
%
%\subsection{Continuous Ising Models}
%\label{sec:Ising.def}
We first introduce a class of continuum Ising models that are related to the functional integral representations of spin boson models.
They can be obtained as an appropriate scaling limit of discrete Ising models, see for example \cite{SpohnDuemcke.1985,HaslerHinrichsSiebert.2021b} as well as the proof of Lemma \ref{lem:GKS}.
After taking the limit, the nearest neighbor interaction is directly encrypted in the underlying a-priori measure which we will introduce now. 

Let $\mathbb P$ denote the law of the standard continuous time random walk on 
$\{-1,1\}$ with uniform initial distribution. In other words, the respective 
stochastic process $X = (X_t)_{t\geq 0}$ fulfills $\mathbb P(X_0=1) = \mathbb 
P(X_0=-1) = 1/2$, and jumps from one element of $\{-1,1\}$ to the other occur after $\operatorname{Exp}(1)$-distributed times independent of each other as well as the starting distribution.
We perturb the process $(X_t)$ by a pair potential, similar to what happens in 
the theory of Gibbs measures. Let $g:\mathbb \IR \to [0,\infty)$ be even and continuous, $\alpha \geq 0$, and set 
\begin{align}\label{def:Ising}
	\mathbb P_{\alpha, T}(\d X) \coloneqq \frac{1}{Z_{\alpha, T}}\exp\Big(\alpha \iint_{[0, T]^2} g(t-s) X_s X_t \, \d s \d t \Big) \,\mathbb P( \d X)
\end{align}
where the partition function $Z_{\alpha, T}$ acts as a normalization constant.
The nonnegative function $g$ introduces an attractive interaction, attributing a larger weight to pairs $(s,t) \in [0,T]^2$ where $X_s$ and $X_t$ have the 
same sign. Therefore it is justified to refer to the process with path measure 
$\mathbb P_{\alpha, T}$ as a continuum Ising model.

While some of our results hold for general continuous and even functions $g$, the connection to the spin boson model appears through the choice 
\begin{align}
		\label{eq:SBIsingMapping}
		g(t) = \int_{\IR^d} |v(k)|^2e^{-|t| \omega(k)} \d k,
		\qquad
		\alpha = \alpha(\lambda)\coloneqq \frac{\lambda^2}{8},
	\end{align}
where $v$, $\omega$ and $\lambda$ are as in \cref{sec:SB}; in particular, $v$ and $\omega$ satisfy \cref{hyp:SA}, and $\lambda$ is the coupling constant
of the spin boson model. The connection is via a Feynman-Kac formula. 
\begin{lem}\label{lem:overlap}
	If $g$ and $\lambda$ are given by \cref{eq:SBIsingMapping} ,
	then
	\begin{align*}
		\braket{\Omega_\downarrow,\e^{-TH_\lambda}\Omega_\downarrow} = Z_{\alpha(\lambda),T}
	\end{align*}
\end{lem}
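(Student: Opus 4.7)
The claim is a Feynman-Kac formula identifying the vacuum-to-vacuum amplitude of the spin-boson semigroup with the continuum Ising partition function on $[0,T]$. My approach combines a path integral for the spin with a Gaussian integration over the boson field.

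I would first pass to the $Q$-space (Schr\"odinger) representation of $\FS$: under this unitary equivalence, $\Omega$ corresponds to the constant function $1$, $\ph(v)$ becomes multiplication by a centered Gaussian variable, and $e^{-t\dG(\omega)}$ is realised as an Ornstein--Uhlenbeck semigroup associated with a stationary Gaussian process $(\Phi_s(v))_{s\ge0}$ whose two-point function is proportional to $g(|t-s|)$. Then I would apply a Trotter/Dyson expansion to the spin: splitting $H_\lambda=(\Id+\sigma_z+\dG(\omega))+\lambda\sigma_x\ph(v)$ and expanding in the off-diagonal coupling $\lambda\sigma_x\ph(v)$ (with Trotter--Kato convergence guaranteed by \cref{eq:standardrelbound}), one obtains a sum over spin trajectories $X_s\in\{-1,+1\}$ with jumps at ordered times, starting and ending at $-1$. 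Combined, this yields a joint Feynman-Kac representation
\begin{align*}
\braket{\Omega_\downarrow,e^{-TH_\lambda}\Omega_\downarrow}=C_0\cdot\EE_{X,\Phi}\Bigl[\mathbf{1}\{X_0=X_T=-1\}\,e^{-\int_0^T(1+X_s)\,\d s}\,e^{-\lambda\int_0^T X_s\Phi_s(v)\,\d s}\Bigr],
\end{align*}
with $X$ a rate-$1$ continuous-time random walk, $\Phi$ an independent Ornstein--Uhlenbeck Gaussian path, and $C_0$ an explicit normalisation constant.

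Next, I would integrate out $\Phi$ via the Gaussian formula: since the coupling $-\lambda\int X_s\Phi_s(v)\,\d s$ is linear in $\Phi$, this yields exactly the Ising density
\begin{align*}
\EE_\Phi\Bigl[e^{-\lambda\int_0^T X_s\Phi_s(v)\,\d s}\Bigr]=\exp\Bigl(\alpha(\lambda)\iint_{[0,T]^2}g(t-s)\,X_sX_t\,\d s\,\d t\Bigr),\qquad\alpha(\lambda)=\tfrac{\lambda^2}{8},
\end{align*}
with the constant $\lambda^2/8$ pinned down by the normalisation of $\ph$ in \cref{sec:SB.Fock}. It remains to identify the residual spin expectation with the unconstrained CTRW average defining $Z_{\alpha(\lambda),T}$: using the $\mathbb{Z}_2$-invariance $X\mapsto-X$ of the Ising interaction and a direct computation with the Poisson density of the CTRW jump times, the asymmetric weight $e^{-\int(1+X_s)\,\d s}$, the two-sided boundary constraint $\mathbf{1}\{X_0=X_T=-1\}$, and the normalisation $C_0$ collectively reduce to the simple density of $\mathbb{P}$ with uniform initial distribution.

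\textbf{Main obstacle.} The chief subtlety lies precisely in this last matching step. That the asymmetric free-spin weight together with the two-sided boundary constraint reorganise cleanly into the symmetric, uniformly initialised measure $\mathbb{P}$ depends in an essential way on the specific form $\Id+\sigma_z$ of the free-spin Hamiltonian (rather than, e.g., $\sigma_z$), and requires careful bookkeeping of Poisson combinatorics of the rate-$1$ CTRW together with the precise normalisations of the field operator fixed in \cref{sec:SB.Fock}; cf.\ the closely related discrete-to-continuum derivations in \cite{SpohnDuemcke.1985,HaslerHinrichsSiebert.2021b}.
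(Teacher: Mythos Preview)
Your displayed Feynman--Kac formula does not follow from the procedure you describe, and this is a genuine gap. Expanding in the off-diagonal perturbation $\lambda\sigma_x\ph(v)$ via the Dyson series produces terms in which the field operator appears \emph{only at the spin-flip times} $t_1<\cdots<t_n$: the $n$-th term is proportional to $(-\lambda)^n\int_{0<t_1<\cdots<t_n<T}\e^{-2|\{s:X_s=+1\}|}\,\EE_\Phi\bigl[\prod_i\Phi_{t_i}(v)\bigr]\,\d t$. No continuous coupling $\e^{-\lambda\int_0^T X_s\Phi_s(v)\,\d s}$ can emerge from this expansion. You have conflated two distinct representations: the Dyson expansion around the diagonal part of the original $H_\lambda$ (field insertions at jump times), and the representation obtained after conjugating by the spin unitary $V:\ket\downarrow\mapsto\tfrac{1}{\sqrt2}(1,1)$ that diagonalises the interaction. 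In the latter route---which is what the references actually do---one has $VH_\lambda V^*=(1-\sigma_x)+\dG(\omega)+\lambda\sigma_z\ph(v)$; now $-(1-\sigma_x)$ is precisely the rate-$1$ CTRW generator and $\sigma_z$ is multiplication by $X_s$, so Feynman--Kac gives the continuous coupling directly. Moreover $V\ket\downarrow$ is the uniform state, so testing against $\Omega_\downarrow$ produces free boundary conditions from the outset: the ``main obstacle'' you flag never arises. Indeed no choice of $C_0$ makes your formula equal $Z_{\alpha(\lambda),T}$, since the tilted law $\propto\mathbf1\{X_0=X_T=-1\}\,\e^{-\int(1+X_s)\,\d s}\,\PP(\d X)$ is not $\PP$ (check $\alpha=0$: the left side decays like $\e^{-(2-\sqrt2)T}$ while $Z_{0,T}=1$, and the $C_0$ that repairs this fails for $\alpha>0$).

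There is a second gap: the $Q$-space identification of $\ph(v)$ with a multiplication operator requires $v$ to lie in the real subspace of a fixed conjugation (e.g.\ $\widehat v$ real), which you do not assume. The paper's own proof is organised around exactly this point and is otherwise quite different from yours: it simply cites \cite{HirokawaHiroshimaLorinczi.2014,HaslerHinrichsSiebert.2021c} for the case of real $\widehat v$, and then removes that restriction by conjugating $H_\lambda$ with a second-quantised unitary $\e^{\i\dG(A)}$ that rotates $v$ to a function with real Fourier transform while fixing $\Omega_\downarrow$.
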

\begin{proof}
	Under the additional assumption that $v$ has real Fourier transform, i.e., $v(k)=\overline{v(-k)}$, proofs can be found, e.g.,\ in  \cite{HirokawaHiroshimaLorinczi.2014,HaslerHinrichsSiebert.2021c}. 
   
    To obtain such a Feynman--Kac formula, it is necessary to diagonalize the interaction term in spin space.
    This is done by conjugating with the unitary mapping $\ket{\downarrow}= (0,1)$ to $\frac{1}{\sqrt{2}} (1,1)$, $\ket{\uparrow}= (1,0)$ to $\frac{1}{\sqrt{2}} (1,-1)$. Then testing with the vector $\Omega_\downarrow$ results in the free boundary conditions in $Z_{\alpha(\lambda),T}$ and not in periodic ones, even when considering a diagonal element of the spin boson semigroup.
       
	To extend the result to arbitrary $v$, let $B$ be a unitary operator satisfying $\widehat{Bv} = |\widehat{v}|$ a.e. pointwise, where\ \ $\widehat{\cdot}$\ \ is  the usual Fourier transform. $B$ can easily be constructed by combining Fourier transform and the pointwise rotation $\e^{-\i\arg \widehat v}$. Let $A$ be a selfadjoint operator s.t. $B=\e^{\i A}$, then
	\begin{align*}
		\e^{\i\dG(A)}H_\lambda\e^{-\i \dG(A)} = \begin{pmatrix}2 & 0 \\ 0 & 0\end{pmatrix}\otimes \Id +  \Id \otimes \dG(\omega)  + \lambda\begin{pmatrix}
			0 & 1 \\ 1 & 0
		\end{pmatrix}\otimes \ph(Bv)
	\end{align*}
	and $\e^{\i\dG(A)}\Omega_\downarrow = \Omega_\downarrow$. Since $Bv$ has real Fourier transform and $\e^{\i\dG(A)} \Omega_\downarrow =\Omega_\downarrow$, the claim follows. 
\end{proof}

The key idea of our proof is now to express the vacuum overlap $\rho$ in terms of spin correlations in our continuum Ising model, with $g$ and $\alpha$ as given in \cref{eq:SBIsingMapping}. For this, we will make use of the identity
\begin{equation}
 	\label{Equation: Rho and GC}
		\rho(\lambda) = \lim_{T\to \infty} \frac{Z_{\alpha(\lambda), T}^2}{Z_{\alpha(\lambda), 2T}}.
\end{equation} 
While representations of this type are not new, see e.g. \cite[Thms.~4.130,131]{LorincziHiroshimaBetz.2020} or \cite[Lemma~4.10]{BetzPolzer.2022} we will give a proof of \cref{Equation: Rho and GC} below. In order to upper bound $\rho(\lambda)$, we will express $1/\rho(\lambda)$ in terms of our model and lower bound the respective expression.
For $s\leq t$ let us denote by $\mathcal D(s, t)$ the number of jumping points of $X$ in the interval $(s, t]$. Then
\begin{equation}
\label{eq: potential in terms of jump process}
    X_s X_t = (-1)^{|\mathcal D(s, t)|}.
\end{equation}
Under the unperturbed measure $\mathbb P$, the jumping points of $X$ form a Poisson point process whose intensity measure is the Lebesgue measure. In particular, for $(s, t)\in [0, T]^2, \, (u, v) \in [T, 2T]^2$, the random variables $|\mathcal D(s, t)|$ and $|\mathcal D(u, v)|$ are  independent under $\mathbb P$ and thus
\begin{equation}
\label{Equation: Definition of P tilde}
		\widetilde{\mathbb P}_{\alpha, 2T}(\d X) = \frac{1}{Z_{\alpha, T}^2} \exp\Big(\alpha \iint_{[0, T]^2 \cup [T, 2T]^2} g(t-s) X_sX_t \, \d s \d t\Big) \, \mathbb P(\mathrm dX).
\end{equation} 
defines a probability measure and we have
\begin{equation}
    \label{Equation: fraction of Z in terms of expected value}
    \frac{Z_{\alpha, 2T}}{Z_{\alpha, T}^2} = \widetilde{\mathbb E}_{\alpha, 2T} \Big[\exp\Big(2\alpha \iint_{[0, T]\times [T, 2T]} g(t-s) X_sX_t \, \d s \d t\Big)\Big]
\end{equation}
where the expected value $\widetilde{\mathbb E}_{\alpha, 2T}$ is taken with respect to $\widetilde{\mathbb P}_{\alpha, 2T}$.
Conditionally on $\{X_T = 1\}$, a path $(X_t)_{t \in [0, 2T]}$ is uniquely determined by its jumping times. Hence, as a consequence of \cref{eq: potential in terms of jump process}, the processes $(X_{T+t})_{t\in [0, T]}$ and $(X_{T-t})_{t\in [0, T]}$ are under $\widetilde{\mathbb P}_{\alpha, 2T}(\, \cdot \, |X_T = 1)$ independent and identically distributed with distribution $\mathbb P_{\alpha, T}(\, \cdot \, | X_0 = 1)$. An application of Jensens inequality to lower bound \cref{Equation: fraction of Z in terms of expected value} is already sufficient to conclude that long-range order in the infrared-critical model excludes the existence of a ground state. By expanding the exponential in \cref{Equation: fraction of Z in terms of expected value} into a series, we will express $\rho$ in terms of correlation functions which allows us to additionally conclude the uniqueness of the phase transition.

For notational brevity,
for $n \in \mathbb N$, $T>0$ and $t_1, \hdots, t_n \in [0,T]$,
let us denote the
$(n+1)$-point correlation functions of the process with path measure given by \cref{def:Ising}, 
as
\begin{equation*}
	\tau_{\alpha, n, T}(t_1, \hdots, t_n) \coloneqq \mathbb E_{\alpha, T}\Big[\prod_{i=1}^n X_{t_i} \big| X_0 = 1\Big] = \mathbb E_{\alpha, T}\Big[\prod_{i=1}^n X_{t_i}X_0\Big]
 \end{equation*}
where we used in the last equality that
\begin{equation*}
    \mathbb E_{\alpha, T}\Big[\prod_{i=1}^n X_{t_i} X_0 \Big] = \frac{1}{2} \mathbb E_{\alpha, T}\Big[\prod_{i=1}^n X_{t_i}\big|X_0 = 1\Big] + \frac{1}{2} \mathbb E_{\alpha, T}\Big[\prod_{i=1}^n -X_{t_i}\big|-X_0 = 1\Big]
\end{equation*}
as well as that $X$ and $-X$ have the same distribution. Of crucial importance is the validity of the GKS inequalities: 

\begin{lem} \label{lem:GKS}
For the continuum Ising model, we have 
\begin{align}
\label{eq:GKS}
	\mathbb E_{\alpha, T}\Big[\prod_{i=1}^n X_{s_i}\Big] \geq 0, \quad \mathbb E_{\alpha, T}\Big[\prod_{i=1}^n X_{s_i} \cdot \prod_{i=1}^m X_{t_i} \Big] \geq \mathbb E_{\alpha, T}\Big[\prod_{i=1}^n X_{s_i}\Big]  \mathbb E_{\alpha, T}\Big[\prod_{i=1}^m X_{t_i}\Big]
\end{align}
for any $n, m \in \N$ and $s_1, \hdots, s_n, t_1, \hdots, t_m \in [0, T]$.
\end{lem}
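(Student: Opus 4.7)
I would obtain both inequalities as $\eps \to 0$ limits of the classical Griffiths--Kelly--Sherman inequalities for an approximating discrete ferromagnetic Ising model, using the discretization alluded to after \cref{def:Ising} and carried out systematically in \cite{SpohnDuemcke.1985,HaslerHinrichsSiebert.2021b}.

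Fix $\eps > 0$ with $N := T/\eps \in \IN$ and consider the sampled spins $\sigma_k := X_{k\eps}$, $0 \le k \le N$. Under the free law $\PP$, the sequence $(\sigma_k)$ is a Markov chain on $\{\pm 1\}$ starting uniformly, whose one-step transition can be put into Ising form
\[
\PP(\sigma_{k+1} = s' \mid \sigma_k = s) = \tfrac{1}{2}\bigl(1 + ss'\,\e^{-2\eps}\bigr) = \frac{\e^{J(\eps)\,ss'}}{\e^{J(\eps)} + \e^{-J(\eps)}}, \qquad J(\eps) := \tfrac{1}{2}\log \coth(\eps) > 0.
\]
Replacing the double integral in \cref{def:Ising} by its Riemann sum on the grid $\{k\eps\}_{k=0}^{N}$ produces a probability measure $\PP^\eps_{\alpha, T}$ on $\{\pm 1\}^{N+1}$ with weights proportional to
\[
\exp\Bigl( J(\eps) \sum_{k=0}^{N-1} \sigma_k \sigma_{k+1} \,+\, \alpha\,\eps^2 \sum_{j,k=0}^{N} g((j-k)\eps)\,\sigma_j \sigma_k \Bigr).
\]
Since $g \ge 0$ and $J(\eps) > 0$, all off-diagonal pair couplings are ferromagnetic, and the a priori single-spin distribution is symmetric under $\sigma \mapsto -\sigma$. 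The classical GKS inequalities therefore apply to $\PP^\eps_{\alpha, T}$ and yield the two bounds in \cref{eq:GKS} with $\mathbb E_{\alpha,T}$ replaced by the expectation $\mathbb E^\eps_{\alpha, T}$ under $\PP^\eps_{\alpha, T}$.

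It remains to pass to the limit $\eps \to 0$ at fixed evaluation times $s_1, \dots, s_n, t_1, \dots, t_m \in [0, T]$. Letting $k_i^\eps$ index the grid cell containing $s_i$, the càdlàg property of $X$ together with the fact that $X$ almost surely has no jump at any deterministic time yields $\sigma_{k_i^\eps} = X_{k_i^\eps \eps} \to X_{s_i}$ $\PP$-almost surely, and analogously for the $t_j$. Continuity of $g$ on $[-T, T]$ and $|X_s| \equiv 1$ ensure that the Riemann sum converges pointwise in $X$ to the double integral while being uniformly bounded by $\alpha T^2 \sup_{[-T,T]} g$. Bounded convergence applied separately to the numerator and denominator of $\mathbb E^\eps_{\alpha, T}$ then gives convergence of the discrete correlation functions to their continuum counterparts, and both GKS inequalities survive the limit, proving \cref{eq:GKS}. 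The only delicate step, and the one I would handle most carefully, is the joint convergence of the sampled spin values together with the exponential reweight inside $\PP^\eps_{\alpha,T}$; however, since the reweight is sandwiched between positive constants uniformly in $X$ (via the uniform Riemann-sum bound), this reduces to a clean application of dominated convergence.
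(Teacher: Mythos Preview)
Your argument is correct and follows the same overall strategy as the paper: approximate by a discrete ferromagnetic Ising model, apply the classical GKS inequalities there, and pass to the limit. The paper's discretisation uses the nearest-neighbour coupling $-\tfrac12\log\delta$ (the leading-order asymptotic of your $J(\eps)=\tfrac12\log\coth\eps$) and excludes nearest neighbours from the long-range part, whereas you keep the exact sampled-chain coupling; both choices are ferromagnetic and the difference is immaterial.

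The one substantive difference is in the limit step. The paper does not argue the convergence of discrete to continuum correlations itself but cites \cite[Prop.~4.1]{HaslerHinrichsSiebert.2021b}. Your route is more self-contained: because you use the \emph{exact} law of the sampled chain as the a-priori measure, both the discrete and the continuum expectations are ratios of $\PP$-expectations on the same path space, and the passage to the limit is a clean dominated-convergence argument (Riemann integrability of $(s,t)\mapsto g(t-s)X_sX_t$ holds $\PP$-a.s.\ since $X$ has only finitely many jumps on $[0,T]$, and the exponent is uniformly bounded). This avoids the external reference at the cost of a short extra paragraph; either way the proof goes through.
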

\begin{proof}
Write $\delta(T, N)\coloneqq TN^{-1}$ for $N\in \mathbb N^+$ and set
\begin{equation*}
	\Lambda(T, N) \coloneqq \{k\delta(T, N): \, 0\leq k \leq N\}.
\end{equation*}
We define the probability measure $\mathbb P_{\alpha, T, N}$ on $\{-1, 1\}^{\Lambda(T, N)}$ by 
\begin{align}\label{eq:discreteIsing}
	\mathbb P_{\alpha, T, N} (\{\sigma\}) \propto \exp\Big(\sum_{\{i, j\} \subseteq \Lambda(T, N)} \big(2\alpha \delta(T, N)^2 g(|i-j|)\1_{i\nsim j} - \tfrac{1}{2}\log(\delta(T, N)) \1_{i\sim j} \big)\sigma_i \sigma_j  \Big)
\end{align}
where we write $i\sim j$ in case that $|i-j| = \delta(T, N)$. Since $\mathbb 
P_{\alpha, T, N}$ is the measure of a ferromagnetic Ising model, the GKS 
inequalities hold (see e.g. \cite{FriedliVelenik.2017}), 
i.e.,\ \cref{eq:GKS} holds when $\mathbb E_{\alpha, T}$ is replaced by $\mathbb 
E_{\alpha, T,N}$ throughout. 
As noticed in \cite{SpohnDuemcke.1985} and rigorously shown in \cite[Prop.~4.1]{HaslerHinrichsSiebert.2021b}, we have for any $t_1, \hdots, t_k\in [0, T]$
\begin{equation}
	\label{eq:convdiscIsing}
	\mathbb E_{\alpha, T}[X_{t_1} \cdots X_{t_k}]  =  \lim_{N \to \infty} \mathbb E_{\alpha, T, N}[X_{i_N(t_1)} \cdots X_{i_N(t_k)}]
\end{equation}	
where $i_N(t_j) \coloneqq \sup\{i \in \Lambda(T, N):\, i \leq t_j\}$ for all $1\leq j\leq k$. The claim follows.    
\end{proof}

It is a standard observation that the GKS inequalities allow to prove the existence of the thermodynamic limit as well as monotonicity w.r.t. the coupling strength.
\begin{lem}\label{lem:correlationthermodynamic}
	 For all $\alpha>0$, $n\in \N$ and $t_1, \hdots, t_n \geq 0$ the limit
	\begin{equation*}
		\tau_{\alpha, n}(t_1, \hdots, t_n) \coloneqq \lim_{T\to \infty} \tau_{\alpha, n, T}(t_1, \hdots, t_n)
	\end{equation*}
	exists and is monotone increasing in $\alpha$ and more general w.r.t. pointwise order in $\alpha g$.
\end{lem}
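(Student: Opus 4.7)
The plan is to derive existence of the thermodynamic limit together with both monotonicity claims from a single comparison principle obtained by a standard GKS interpolation. In the first step I extend \cref{lem:GKS} to measures of the form
\[
  \PP_J(\d X) \propto \exp\Bigl(\iint J(s,t)\, X_s X_t \, \d s \, \d t\Bigr) \, \PP(\d X)
\]
defined for arbitrary nonnegative, bounded, symmetric, continuous $J \colon [0,\infty)^2 \to [0,\infty)$ of compact support, and I establish the comparison
\[
  \EE_J\Bigl[\prod_{i=1}^n X_{t_i}\Bigr] \;\leq\; \EE_{J'}\Bigl[\prod_{i=1}^n X_{t_i}\Bigr] \qquad \text{whenever } J \leq J' \text{ pointwise.}
\]
Both GKS inequalities \cref{eq:GKS} extend to $\PP_J$ by the same discretization used in \cref{lem:GKS}: the pair coupling $2\alpha\delta(T,N)^2 g(|i-j|)$ in \cref{eq:discreteIsing} is replaced by $2\delta(T,N)^2 J(i,j)$, which remains a nonnegative ferromagnetic pair coupling to which the discrete GKS inequalities apply, and the approximation \cref{eq:convdiscIsing} only uses continuity and boundedness of the coupling.

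To prove the comparison, interpolate $J_u \coloneqq (1-u)J + u J'$ for $u \in [0,1]$; differentiating in $u$ yields
\[
  \frac{\d}{\d u} \EE_{J_u}\Bigl[\prod_i X_{t_i}\Bigr] \;=\; \iint (J'-J)(s,t) \Bigl\{ \EE_{J_u}\Bigl[X_s X_t \prod_i X_{t_i}\Bigr] - \EE_{J_u}[X_s X_t] \, \EE_{J_u}\Bigl[\prod_i X_{t_i}\Bigr] \Bigr\} \, \d s \, \d t,
\]
which is nonnegative by $J' - J \geq 0$ and the extended second GKS inequality.

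The lemma then follows quickly. Monotonicity of $\tau_{\alpha,n,T}(t_1, \ldots, t_n)$ in $\alpha$, and more generally pointwise in $\alpha g$, is the comparison applied with $J = \alpha_1 g_1 \1_{[0,T]^2}$ versus $J' = \alpha_2 g_2 \1_{[0,T]^2}$, where the set of marked spins is $\{0, t_1, \ldots, t_n\}$ (recalling that $\tau$ carries an additional factor $X_0$). For existence of the $T \to \infty$ limit, apply the comparison with $J = \alpha g \1_{[0,T]^2}$ and $J' = \alpha g \1_{[0,T']^2}$ for $T \leq T'$, on the common path space over $[0, T']$: since $J$ is supported in $[0,T]^2$, marginalising $X$ on $(T, T']$ leaves $\EE_J[X_0 \prod_i X_{t_i}] = \tau_{\alpha,n,T}(t_1, \ldots, t_n)$ for $t_i \in [0, T]$, so the comparison gives $\tau_{\alpha,n,T}(t_1, \ldots, t_n) \leq \tau_{\alpha,n,T'}(t_1, \ldots, t_n)$. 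The boundedness $|\tau_{\alpha,n,T}| \leq 1$ (immediate from $|X_t| = 1$) then forces convergence, and both monotonicity statements pass to the limit.

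The main obstacle is the bookkeeping required to extend \cref{lem:GKS} to inhomogeneous couplings $J$: the statement in the excerpt uses the translation-invariant form $g(|i-j|)$, but the discrete GKS inequalities are purely combinatorial and hold for any nonnegative ferromagnetic coupling, and the discretization convergence established in \cite[Prop.~4.1]{HaslerHinrichsSiebert.2021b} relies only on continuity and boundedness. Beyond verifying these two steps for inhomogeneous couplings, no new ingredient is needed, and the interpolation scheme is entirely standard.
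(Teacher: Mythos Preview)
Your approach is correct and genuinely different from the paper's. The paper never leaves the translation-invariant coupling $g(|i-j|)$: it reduces monotonicity in $T$ to monotonicity in the \emph{domain} for the discrete Ising model by restricting to rational $T_1<T_2$, choosing a common lattice $\Lambda(T_1,p_1q_2M)\subset\Lambda(T_2,p_2q_1M)$ with equal mesh, applying the standard discrete domain-monotonicity there, and passing to the continuum via \cref{eq:convdiscIsing}. Your route---extending GKS to inhomogeneous ferromagnetic $J$ and then comparing $J=\alpha g\,\1_{[0,T]^2}$ against $J'=\alpha g\,\1_{[0,T']^2}$ by interpolation---trades the rational-lattice bookkeeping for the (equally routine) verification that the discretization and GKS survive non-translation-invariant couplings, and yields a single pointwise comparison principle from which both the $T$- and $\alpha g$-monotonicity drop out simultaneously.

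One small inconsistency to repair: you set up the extended GKS for \emph{continuous} $J$, but then apply it to $J=\alpha g\,\1_{[0,T]^2}$, which is discontinuous on $\partial[0,T]^2\cap[0,T']^2$. Either relax the hypothesis to bounded $J$ whose discontinuity set has Lebesgue measure zero (the Riemann-sum convergence underlying \cref{eq:convdiscIsing} is unaffected), or keep continuity and approximate $\1_{[0,T]^2}$ from below by continuous cutoffs $\chi_\varepsilon$, using your comparison for $\alpha g\,\chi_\varepsilon\le J'$ and passing $\varepsilon\to0$ by dominated convergence.
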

\begin{proof}
	We will in fact prove the stronger statement that $\tau_{\alpha, n, T}(t_1, \hdots, t_n)$ is for any $n\in \N$ and $t_1, \hdots, t_n\in [0, T_0]$ an increasing function of both $T \geq T_0$ and $\alpha\geq 0$ as well as $g$.
	
	In the discrete case, the monotonicity in coupling and domain are well-known consequences of the GKS inequalities, see for example \cite[Exercise 3.31]{FriedliVelenik.2017}. In the continuum case, the monotonicity in $\alpha$ and $g$ hence follows directly by taking the scaling limit. For the monotonicity in $T$, it suffices to restrict ourselves to show monotonicity on $\mathbb Q \cap [T_0, \infty)$. Let $T_1, T_2\in \mathbb Q \cap [T_0, \infty)$ with $T_1<T_2$ and $p_1, p_2, q_1, q_2 \in \mathbb N$ such that $T_1 = p_1/q_1, T_2 = p_2/q_2$. Then for any $M\in \mathbb N^+$
	\begin{equation*}
		\Lambda(T_1, p_1 q_2 M) \subset \Lambda(T_2, p_2 q_1 M), \quad \delta(T_1, p_1 q_2 M) = \delta(T_2, p_2 q_1 M).
	\end{equation*}
	Hence, the monotonicity in $T$ follows from the monotonicity in the domain in the discrete model by taking $M\to \infty$.
	%Hence, $\tau_{\alpha, n}$ is indeed well defined and increasing in $\alpha$. 
\end{proof}

We are now ready to state and prove the main technical result of this work. 
Below, the convolution  $f\ast h:[0, \infty)^n \to  [0, \infty)$ of two measurable functions $f, h:[0, \infty)^n \to [0, \infty)$ is defined as the ordinary convolution when extending $f$ and $h$ by zero to $\mathbb R^n$, i.e.,
\begin{equation*}
	(f\ast h)(t) \coloneqq \int_{[0, t]} f(s)h(t-s) \, \d s, \quad \text{ where } [0, t] \coloneqq \bigtimes_{i=1}^n [0, t_i].
\end{equation*}
\begin{thm}\label{thm:GSoverlap}
	Let $g$ and $\alpha(\lambda)$ be given by \cref{eq:SBIsingMapping}, and recall that $\rho(\lambda)$ is defined in \cref{def:rho}. We  have
	\begin{equation*}
		\rho(\lambda) = \bigg( \sum_{n=0}^\infty \frac{(2\alpha(\lambda))^n}{n!} \int_{[0, \infty)^n} (\tau_{\alpha(\lambda), n}\ast\tau_{\alpha(\lambda), n}) (t)  \prod_{i=1}^ng(t_i) \d t_i \bigg)^{-1}
	\end{equation*}
	where  $1/\infty \coloneqq 0$.
\end{thm}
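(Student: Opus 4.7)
The plan is to combine the three ingredients set up in the preceding discussion: the ratio identity $\rho(\lambda) = \lim_{T\to\infty} Z_{\alpha,T}^2/Z_{\alpha,2T}$ (writing $\alpha := \alpha(\lambda)$ throughout), the representation of $Z_{\alpha,2T}/Z_{\alpha,T}^2$ as a $\widetilde{\mathbb P}_{\alpha,2T}$-expectation, and a power-series expansion of the exponential in that expectation. The nonnegativity of $g$ together with the GKS inequalities (\cref{lem:GKS}) should render every convergence issue elementary via monotone convergence.

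To establish the ratio identity, I would write $Z_T^2/Z_{2T} = |\braket{\Omega_\downarrow,\psi_T}|^2$ with the unit vector $\psi_T := \e^{-TH_\lambda}\Omega_\downarrow/\|\e^{-TH_\lambda}\Omega_\downarrow\|$. By weak compactness, any weak subsequential limit of $\psi_T$ lies in $\ker(H_\lambda - E_\lambda)$: if no ground state exists, every such limit vanishes and $|\braket{\Omega_\downarrow,\psi_T}|^2 \to 0 = \rho(\lambda)$; if one exists, the unique limit is $P_\lambda(\{E_\lambda\})\Omega_\downarrow/\sqrt{\rho(\lambda)}$, so $|\braket{\Omega_\downarrow,\psi_T}|^2 \to \rho(\lambda)$. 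Combined with \cref{lem:overlap} this yields \eqref{Equation: Rho and GC}.

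Next, I would expand the exponential in \eqref{Equation: fraction of Z in terms of expected value} as a power series and interchange summation with expectation by Tonelli, obtaining
\begin{equation*}
	\frac{Z_{\alpha,2T}}{Z_{\alpha,T}^2} = \sum_{n=0}^\infty \frac{(2\alpha)^n}{n!}\int_{[0,T]^n \times [T,2T]^n} \widetilde{\mathbb E}_{\alpha,2T}\Big[\prod_{i=1}^n X_{s_i}X_{t_i}\Big] \prod_{i=1}^n g(t_i-s_i)\,\d s\,\d t.
\end{equation*}
Since $\widetilde{\mathbb P}_{\alpha,2T}$ is invariant under the global flip $X \mapsto -X$ and $\prod_i X_{s_i}X_{t_i}$ is flip-invariant, the expectation equals its value conditional on $\{X_T = 1\}$. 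The conditional independence stated after \eqref{Equation: Definition of P tilde} then factors this into $\tau_{\alpha,n,T}(T-s_1,\ldots,T-s_n)\,\tau_{\alpha,n,T}(t_1-T,\ldots,t_n-T)$; both factors are nonnegative by \cref{lem:GKS}, which justifies Tonelli a posteriori. Substituting $u_i := T-s_i \in [0,T]$ and $v_i := t_i - T \in [0,T]$ gives $t_i - s_i = u_i + v_i$, and for fixed $w := u + v$ the inner $u$-range is $\prod_i[\max(0,w_i-T),\min(w_i,T)]$ while $w$ ranges over $[0,2T]^n$. Sending $T \to \infty$, \cref{lem:correlationthermodynamic} yields $\tau_{\alpha,n,T}\uparrow\tau_{\alpha,n}$ pointwise, the $u$-domain expands monotonically to $\prod_i[0,w_i]$, and the $w$-domain to $[0,\infty)^n$; monotone convergence commutes the limit with both integrals and with the $n$-sum, producing
\begin{equation*}
	\lim_{T\to\infty}\frac{Z_{\alpha,2T}}{Z_{\alpha,T}^2} = \sum_{n=0}^\infty \frac{(2\alpha)^n}{n!} \int_{[0,\infty)^n}(\tau_{\alpha,n}\ast\tau_{\alpha,n})(w)\prod_{i=1}^n g(w_i)\,\d w.
\end{equation*}
Inverting via \eqref{Equation: Rho and GC}, with the convention $1/\infty = 0$ covering $\rho(\lambda) = 0$, gives the claim.

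The only nontrivial point I anticipate is the case $\rho(\lambda) = 0$ in the ratio identity, where both $Z_T^2$ and $Z_{2T}$ decay to zero so the spectral measure does not obviously pin down their ratio. The weak-compactness argument handles it cleanly: the $\psi_T$ have unit norm and can therefore lose $\Omega_\downarrow$-overlap only by converging weakly to zero, which forces $Z_T^2/Z_{2T} \to 0$. Once this is in place, the remaining steps are routine consequences of the nonnegativity structure supplied by $g \ge 0$ and the GKS inequalities.
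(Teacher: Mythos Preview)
Your proposal is correct, and the series-expansion half is essentially the paper's argument (the paper keeps the $(u,v)$ variables and passes to $w=u+v$ only after taking $T\to\infty$, but this is cosmetic). The genuine difference is in how you establish $\rho(\lambda)=\lim_{T\to\infty}Z_{\alpha,T}^2/Z_{\alpha,2T}$ in the case $\rho(\lambda)=0$. You argue via weak compactness: every weak subsequential limit of $\psi_T$ lies in $\ker(H_\lambda-E_\lambda)$, so if the kernel is trivial then $\psi_T\rightharpoonup0$ and the overlap vanishes. This is valid, but the assertion that weak limits lie in the kernel needs the spectral concentration $\|(1-P_\lambda([E_\lambda,E_\lambda+\varepsilon)))\psi_T\|\to0$ for each $\varepsilon>0$, which in turn rests on $E_\lambda=\inf\operatorname{supp}\mu_{\Omega_\downarrow}$; you use this implicitly without flagging it. The paper instead proves the contrapositive $\tilde\rho:=\lim Z_T^2/Z_{2T}>0\Rightarrow\rho(\lambda)>0$ by a purely real-variable iteration: the series expansion already shows $T\mapsto Z_{2T}/Z_T^2$ is nondecreasing, hence $Z_{2T}\le Z_T^2/\tilde\rho$ for all $T$; iterating along $T_k=2^kT_0$ and comparing with the subexponential lower bound $e^{TE_\lambda}Z_T\ge b^T$ forces $e^{T_0E_\lambda}Z_{T_0}\le 1\wedge\tilde\rho$ and hence $\rho(\lambda)\ge1\wedge\tilde\rho>0$. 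Your route is more conceptual and reusable; the paper's is more self-contained in that it recycles the monotonicity already produced by the expansion.
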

\begin{proof}
    As a first step, we show that for all continuous, even functions $g: \mathbb R \to [0,
    \infty)$ and all $\alpha \geq 0$, 
    \begin{align}
    \label{eq:first step}
        \lim_{T \to \infty} \frac{Z_{\alpha, 2T}}{Z_{\alpha, T}^2} = 
        \sum_{n=0}^\infty \frac{(2\alpha)^n}{n!} \int_{[0, \infty)^n} (\tau_{\alpha, n}\ast\tau_{\alpha, n}) (t)  \prod_{i=1}^ng(t_i) \d t_i 
        \,\, \in [1, \infty],
    \end{align}
    where the existence of the limit is part of the claim. To show it, fix $T>0$ and recall the definition of $\widetilde{\mathbb P}_{\alpha, 2T}$ given in \cref{Equation: Definition of P tilde}.
	By extending the exponential into a series, we obtain
	\begin{align*}
		\frac{Z_{\alpha, 2T}}{Z_{\alpha, T}^2} &= \widetilde{\mathbb E}_{\alpha, 2T}\Big[\exp\Big(2\alpha \int_0^T \d s \int_T^{2T} \d t \, g(t-s) X_s X_t\Big) \Big] \\
		&=\sum_{n=0}^\infty \frac{(2 \alpha)^n}{n!} \int_{[0, T]^N} \d s \int_{[T, 2T]^N} \d t \, \prod_{i=1}^{n} g(t_i-s_i) \widetilde{\mathbb E}_{\alpha, 2T}\Big[\prod_{i=1}^n X_{s_i} X_{t_i} \Big].
	\end{align*}
    As mentioned above, the processes $(X_{T+t})_{t\in [0, T]}$ and $(X_{T-t})_{t\in [0, T]}$ are under $\widetilde{\mathbb P}_{\alpha, 2T}(\, \cdot \, |X_T = 1)$ independent and identically distributed with distribution $\mathbb P_{\alpha, T}(\, \cdot \, | X_0 = 1)$. We hence get for all $s_1, \hdots, s_n \in [0, T]$ and $t_1, \hdots, t_n\in [T, 2T]$
	\begin{align*}
		\widetilde{\mathbb E}_{\alpha, 2T}\Big[\prod_{i=1}^n X_{s_i} X_{t_i} \Big] &= \widetilde{\mathbb E}_{\alpha, 2T}\Big[\prod_{i=1}^n X_{s_i} X_{t_i} \big| X_T = 1 \Big] \\
		&= \widetilde{\mathbb E}_{\alpha, 2T}\Big[\prod_{i=1}^n X_{s_i} \big| X_T = 1 \Big] \widetilde{\mathbb E}_{\alpha, 2T}\Big[\prod_{i=1}^n X_{t_i} \big| X_T = 1 \Big] \\
		&= \mathbb E_{\alpha, T}\Big[\prod_{i=1}^n X_{T-s_i} \big| X_0 = 1 \Big]\mathbb E_{\alpha, T}\Big[\prod_{i=1}^n X_{t_i-T} \big| X_0 = 1 \Big] \\
		&= \tau_{\alpha, n, T}(T-s_1, \hdots, T-s_n)\tau_{\alpha, n, T}(t_1-T, \hdots, t_n-T).
	\end{align*}
	A change of variables, monotone convergence, and a further change of variables give 
	\begin{align*}		
		\frac{Z_{\alpha, 2T}}{Z_{\alpha, T}^2} & = \sum_{n=0}^\infty \frac{(2 \alpha)^n}{n!} \int_{[0, T]^N} \d u \int_{[0, T]^N} \d v \, \prod_{i=1}^{n} g(u_i+v_i) \tau_{\alpha, n, T}(u)\tau_{\alpha, n, T}(v)
        \\
        & \stackrel{T \to \infty}{\longrightarrow} \sum_{n=0}^\infty \frac{(2 \alpha)^n}{n!} \int_{[0, \infty)^n} \d u \int_{[0, \infty)^n} \d v \, \prod_{i=1}^{n} g(u_i+v_i) \tau_{\alpha, n}(u)\tau_{\alpha, n}(v) \\
		&=\sum_{n=0}^\infty \frac{(2 \alpha)^n}{n!} \int_{[0, \infty)^n} \d t \prod_{i=1}^{n} g(t_i) \int_{[0, t]} \d s \, \tau_{\alpha, n}(t-s)\tau_{\alpha, n}(s),
	\end{align*}
    which is \cref{eq:first step}. 
    
    Now let $g$ and $\alpha(\lambda)$ be given as in \cref{eq:SBIsingMapping}.
    The second step of the proof is then provided by the formula 
\begin{equation}
 	\label{Rho and GC}
		\rho(\lambda) = \lim_{T\to \infty} \frac{Z_{\alpha(\lambda), T}^2}{Z_{\alpha(\lambda), 2T}}.
\end{equation} 
This equality is not new, see e.g.\ \cite[Thms.~4.130,131]{LorincziHiroshimaBetz.2020} or \cite[Lemma~4.10]{BetzPolzer.2022}. 
We nevertheless present a full proof for the convenience of the reader.
First, we note that by our first step, the limit exists in $[0,1]$. 	
	By the spectral theorem, \cref{lem:overlap,rem:unique} imply
	\begin{equation}
		\label{Equation: Ground state energy in terms of partition function}
		E_\lambda = -\lim_{T\to \infty} \log(Z_{\alpha(\lambda), T})/T
	\end{equation}
	as well as
	\begin{equation}
		\label{Equation: Exact exponential growth of partition function}
		\rho(\lambda) = \lim_{T \to \infty} e^{T E_\lambda} Z_{\alpha(\lambda), T}.
	\end{equation}
	Hence, if $\rho(\lambda)>0$, then
	\begin{equation*}
		\rho(\lambda) = \lim_{T\to \infty} \frac{\big(Z_{\alpha(\lambda), T}e^{E_\lambda T}\big)^2}{Z_{\alpha(\lambda), 2T}e^{2E_\lambda T}} = \lim_{T\to \infty} \frac{Z_{\alpha(\lambda), T}^2}{Z_{\alpha(\lambda), 2T}}.
	\end{equation*}
	It remains to prove
	\begin{equation*}
		\tilde \rho(\lambda) \coloneqq \lim_{T\to \infty} \frac{Z_{\alpha(\lambda), T}^2}{Z_{\alpha(\lambda), 2T}} > 0
		\implies
		\rho(\lambda)> 0.
	\end{equation*}
	To this end, first note that \cref{Equation: Ground state energy in terms of partition function} yields $\log(\e^{T E_\lambda}Z_{\alpha(\lambda),T})/T\xrightarrow{T\to\infty} 0$ 
    and hence
	\begin{align}\label{eq:decaylowerbound}
		\forall b\in(0,1)\quad\exists T_b>0\quad\forall T\ge T_b: \e^{TE_\lambda}Z_{\alpha(\lambda),T} \ge b^T.
	\end{align}
	Further, by the monotonicity of $T \mapsto Z_{\alpha,T}^2/Z_{\alpha,2T}$ established above, 
	\begin{align*}
		Z_{\alpha(\lambda),2T}\le \frac{1}{\tilde\rho(\lambda)}Z_{\alpha(\lambda),T}^2\le \frac{1}{1\wedge\tilde\rho(\lambda)}Z_{\alpha(\lambda),T}^2,\qquad T\ge 0,
	\end{align*}
	where $a\wedge b = \min(a,b)$ as usual.
	Iteration yields
	\begin{align}\label{eq:decayupperbound}
		\e^{T_kE_\lambda}Z_{\alpha(\lambda),T_k}
		\le
		(1\wedge\tilde\rho(\lambda)) \left(\frac{\e^{T_0E_\lambda} Z_{\alpha(\lambda),T_0}}{1\wedge\tilde\rho(\lambda)}\right)^{T_k/T_0}, \qquad k\in\IN,\  T_0\ge 0,\ T_k=2^kT_0.
	\end{align}
	Comparing \cref{eq:decaylowerbound,eq:decayupperbound} in the limit $k\to\infty$, we see $\rho(\lambda)\ge 1\wedge\tilde \rho(\lambda)>0$, which proves the statement.
\end{proof}
Most of the results about the spin boson model are now corollaries of \cref{thm:GSoverlap}.
\begin{proof}[\textbf{Proof of \cref{thm:SB.int}}]
\cref{thm:GSoverlap,lem:correlationthermodynamic} imply that
$\lambda\mapsto \rho(\lambda)$ is even and decreasing in $|\lambda|$.
Together with \cref{eq:gsoverlap}, this proves  \cref{thm:SB.int}. 
\end{proof}
Next, note that for all $\alpha \geq 0$, all $t = (t_1, \ldots, t_n)$ and all $n$, 
	\begin{align}\label{eq:correstimate}
		\big(\inf_{r \geq 0}\tau_{\alpha, 1}(r) \big)^{2n} \prod_{i=1}^n t_i \leq \int_{[0, t]} \prod_{i=1}^n \tau_{\alpha, 1}(s_i) \tau_{\alpha, 1}(t_i-s_i) \, \d s \leq \int_{[0, t]} \tau_{\alpha,n}(s) \tau_{\alpha, n}(t-s) \, \d s  \leq \prod_{i=1}^n t_i,
	\end{align}
where the second inequality is due to the GKS inequalities \cref{eq:GKS}.
This gives the
\begin{proof}[\textbf{Proof of \cref{thm:SB.IRreg}}]
    By Fubini's theorem,
\begin{align}
\label{eq:Fubini}
		\frac v\omega\in L^2(\IR^d) \iff \int_0^\infty tg(t)\d t <\infty,
\end{align}
with $g$  given in 
\cref{eq:SBIsingMapping}. The rightmost inequality of 
\cref{eq:correstimate} together with \cref{thm:GSoverlap} now 
proves the claim.
\end{proof}
This also gives us the upper bound in \cref{thm:overlapupperbound}.
\begin{proof}[\textbf{Proof of \cref{thm:overlapupperbound}}]
We use the middle inequality in \cref{eq:correstimate} together with the inequality 
\begin{equation*}
		\prod_{i=1}^n \tau_{\alpha, 1}(s_i) \tau_{\alpha, 1}(t_i-s_i) \geq  \prod_{i=1}^n \tau_{0, 1}(s_i) \tau_{0, 1}(t_i-s_i) = \prod_{i=1}^n e^{-2t_i}. 
\end{equation*}
that is due to \cref{lem:correlationthermodynamic}. 
Combining this with \cref{thm:GSoverlap} proves the statement.
\end{proof}
The leftmost inequality in \cref{eq:correstimate} gives the 
following step towards the proof of \cref{thm:SB.IRdiv}:
\begin{cor}\label{cor:absence}
	Assume that  $\frac v\omega \notin L^2(\IR^d)$. Assume further that the continuum Ising model exhibits long range order, i.e.,\ that $\inf_{t\geq 0} \tau_{\alpha(\lambda), 1}(t) > 0$. Then the 
    spin boson model does not have a ground state, i.e.,\ 
    $\rho(\lambda) = 0$.  
 %    if \cref{eq:SBIsingMapping} holds
	% \begin{equation*}
	% 	\inf_{t\geq 0} \tau_{\alpha(\lambda), 1}(t) > 0 \implies \rho(\lambda)=0.
	% \end{equation*}
\end{cor}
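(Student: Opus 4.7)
The plan is to combine \cref{thm:GSoverlap} with the leftmost inequality in \cref{eq:correstimate} to show that the series representing $\rho(\lambda)^{-1}$ diverges; the conclusion $\rho(\lambda)=0$ then follows from the convention $1/\infty\coloneqq 0$ stated in \cref{thm:GSoverlap}. Since every summand of that series is non-negative, it suffices to produce a single infinite term, and the $n=1$ term is the natural target because it only involves the two-point function $\tau_{\alpha(\lambda),1}$, which is directly controlled by the long range order hypothesis.

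Concretely, I would set $\eta\coloneqq \inf_{t\ge 0}\tau_{\alpha(\lambda),1}(t)$, which is strictly positive by assumption (positivity of $\tau_{\alpha(\lambda),1}$ itself is guaranteed by the first GKS inequality in \cref{lem:GKS}). Specialising the leftmost inequality of \cref{eq:correstimate} to $n=1$ yields $(\tau_{\alpha(\lambda),1}\ast\tau_{\alpha(\lambda),1})(t_1)\ge \eta^2 t_1$, so that the $n=1$ summand in the expansion of $\rho(\lambda)^{-1}$ given by \cref{thm:GSoverlap} is bounded below by $2\alpha(\lambda)\,\eta^2\int_0^\infty t\,g(t)\,\d t$. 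I would then invoke \cref{eq:Fubini}, which identifies the hypothesis $\tfrac{v}{\omega}\notin L^2(\IR^d)$ with the divergence $\int_0^\infty t g(t)\,\d t=\infty$. The $n=1$ term is therefore infinite, whence $\rho(\lambda)^{-1}=\infty$ and $\rho(\lambda)=0$.

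There is essentially no technical obstacle to this argument: \cref{thm:GSoverlap} and \cref{eq:correstimate} carry all the real content, and the corollary reduces to selecting the correct term of the series and applying the equivalence \cref{eq:Fubini}. If one wanted a quantitatively stronger bound, the same reasoning applied simultaneously to every summand gives $\rho(\lambda)^{-1}\ge \sum_{n\ge 0}\bigl(2\alpha(\lambda)\,\eta^2\int_0^\infty tg(t)\,\d t\bigr)^n/n!$, but this refinement is not needed for the non-existence statement.
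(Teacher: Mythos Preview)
Your proposal is correct and follows exactly the paper's approach: the paper's proof is a one-line reference to \cref{eq:Fubini,eq:correstimate} stating that the $n=1$ term in the sum from \cref{thm:GSoverlap} is infinite, which is precisely what you spell out.
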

\begin{proof}
By \cref{eq:Fubini,eq:correstimate}, already the term corresponding to $n=1$ of the sum appearing in \cref{thm:GSoverlap} is infinite. This shows the claim. 
\end{proof}

The condition  $\frac v\omega \notin L^2(\IR^d)$ is clearly fulfilled under the assumptions of 
\cref{thm:SB.IRdiv}; what remains to be shown is thus that these asumptions also 
imply long range order in the continuum Ising model.
We will do this in the next section, after discussing the proof of 
\cref{prop:SB.IRdiv} as the last item of the present section. 

For this, let us recall the following existence criterion, which essentially stems from the articles \cite{HaslerHinrichsSiebert.2021a,HaslerHinrichsSiebert.2021c}.
\begin{prop}\label{prop:existencecorbound}
	Assume that $\omega$ is continuous and the zeros of $\omega$ are isolated.
	If \cref{eq:SBIsingMapping} and
	\begin{align}
    \label{Equation: Condition for existence}
		\limsup_{T\to \infty}\frac1T\int_0^T\int_0^T\EE_{\alpha(\lambda),T}[X_tX_s]\d t\d s<\infty,
	\end{align}
 hold, then $\rho(\lambda)>0$.
\end{prop}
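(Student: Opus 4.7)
My plan is to obtain this criterion as a transcription into the Ising framework of the compactness-based existence scheme of \cite{HaslerHinrichsSiebert.2021a,HaslerHinrichsSiebert.2021c}, whose steps I would recall and adapt to the notation of \cref{sec:Ising}.

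I would start from the standard Griesemer--Lieb--Loss infrared regularization going back to \cite{GriesemerLiebLoss.2001}: set $\omega_\varepsilon(k)\coloneqq\omega(k)+\varepsilon$ and denote by $H_\lambda^{(\varepsilon)}$ the corresponding spin boson Hamiltonian. Since $\omega_\varepsilon$ is bounded away from zero, results such as \cite{AraiHirokawa.1995} provide a normalized ground state $\psi_\lambda^{(\varepsilon)}$. The continuity of $\omega$ together with the isolated-zeros assumption ensures that $H_\lambda^{(\varepsilon)}\to H_\lambda$ in norm resolvent sense as $\varepsilon\downarrow 0$, and a weak limit point $\psi_\lambda^\star$ of $(\psi_\lambda^{(\varepsilon)})_\varepsilon$ lies in $\ker(H_\lambda-E_\lambda)$. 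By the $N$-tail argument of \cite{GriesemerLiebLoss.2001}, $\psi_\lambda^\star\ne 0$ (equivalently $\rho(\lambda)>0$) as soon as the expected boson number $\langle\psi_\lambda^{(\varepsilon)},\dG(\Id)\psi_\lambda^{(\varepsilon)}\rangle$ remains uniformly bounded in $\varepsilon$.

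The core of the proof is to translate this uniform photon-number bound into the Ising language. I would use the pull-through formula to write
\[
\langle\psi_\lambda^{(\varepsilon)},\dG(\Id)\psi_\lambda^{(\varepsilon)}\rangle \;=\; \lambda^2\int_{\IR^d}|v(k)|^2\,\langle\psi_\lambda^{(\varepsilon)},(H_\lambda^{(\varepsilon)}-E_\lambda^{(\varepsilon)}+\omega_\varepsilon(k))^{-2}\,\sigma_x\psi_\lambda^{(\varepsilon)}\rangle\,\d k,
\]
where $\sigma_x=\bigl(\begin{smallmatrix}0&1\\1&0\end{smallmatrix}\bigr)$. Representing each resolvent as a time integral and invoking \cref{lem:overlap} applied to $H_\lambda^{(\varepsilon)}$ (whose mass gap makes the infinite-time limit straightforward) converts the matrix element into an Ising correlator with two marked times. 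After performing the $k$-integral first, the result is
\[
\langle\psi_\lambda^{(\varepsilon)},\dG(\Id)\psi_\lambda^{(\varepsilon)}\rangle \;\leq\; C\cdot \limsup_{T\to\infty}\frac1T\int_0^T\!\!\int_0^T \mathbb E_{\alpha(\lambda),T}^{(\varepsilon)}[X_sX_t]\,\d s\,\d t,
\]
where $\mathbb E^{(\varepsilon)}$ is the continuum Ising measure associated via \cref{eq:SBIsingMapping} to $g_\varepsilon(t)=\int|v(k)|^2 e^{-t\omega_\varepsilon(k)}\d k$ and $C$ depends only on $v,\omega$. Since $g_\varepsilon\leq g$ pointwise, the monotonicity statement in \cref{lem:correlationthermodynamic} upgrades the right-hand side to an $\varepsilon$-free quantity, namely the one appearing in \cref{Equation: Condition for existence}. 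Assumption \cref{Equation: Condition for existence} then furnishes the uniform $N$-bound, so the weak limit is a genuine ground state and $\rho(\lambda)>0$.

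The main obstacle is the second step: the pull-through identity and the subsequent Feynman--Kac manipulation must produce kernels controllable uniformly in $\varepsilon$, and the combinatorics relating the two marked times in the Ising correlator to the structure of $g_\varepsilon$ must not lose any factors that could blow up as $\varepsilon\downarrow 0$. This is precisely where the assumption that $\omega$ is continuous with isolated zeros is used: it prevents the infrared singularity of $v/\omega_\varepsilon$ from accumulating in the integrals against $|v|^2$ as $\varepsilon\to 0$ and allows the uniform passage to the limit.
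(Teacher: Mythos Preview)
Your overall strategy (mass regularization $\omega\to\omega+\varepsilon$, ground states $\psi^{(\varepsilon)}$ of the massive model, GLL compactness with a uniform photon-number bound) is the same infrastructure the paper leans on. However, the decisive step---bounding $\langle\psi^{(\varepsilon)},\dG(\Id)\psi^{(\varepsilon)}\rangle$ by a constant times the Ising susceptibility in \cref{Equation: Condition for existence}---does not follow from the Feynman--Kac manipulation you describe. First, your pull-through formula should read $\langle\sigma_x\psi^{(\varepsilon)},(H^{(\varepsilon)}-E^{(\varepsilon)}+\omega_\varepsilon(k))^{-2}\sigma_x\psi^{(\varepsilon)}\rangle$ (a $\sigma_x$ is missing on the left). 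More importantly, writing the squared resolvent as a time integral and performing the $k$-integral yields an expression of the type $\lambda^2\int_0^\infty t\,g_\varepsilon(t)\,\langle\sigma_x\psi^{(\varepsilon)},e^{-t(H^{(\varepsilon)}-E^{(\varepsilon)})}\sigma_x\psi^{(\varepsilon)}\rangle\,\d t$, i.e.\ a correlator weighted by $t\,g_\varepsilon(t)$, not the bare susceptibility $\tfrac1T\iint\EE^{(\varepsilon)}[X_sX_t]$. Passing from the former to the latter is exactly the non-trivial content of \cite[Thm.~2.8]{HaslerHinrichsSiebert.2021a}: one uses the spin-flip symmetry $\sigma_x\psi^{(\varepsilon)}\perp\psi^{(\varepsilon)}$ together with the operator inequality $\|R(k)\phi\|^2\le\omega_\varepsilon(k)^{-1}\langle\phi,R(0)\phi\rangle$ for $\phi$ orthogonal to the ground state, and then identifies $\langle\sigma_x\psi^{(\varepsilon)},R(0)\sigma_x\psi^{(\varepsilon)}\rangle$ with $\tfrac12|\tilde E''_{\lambda,\varepsilon}(0)|$ via second-order perturbation theory. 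Only after that does one invoke the Feynman--Kac identity of \cite[Cor.~2.14]{HaslerHinrichsSiebert.2021c} to rewrite $\tilde E''$ as the susceptibility. Your proposal skips this chain, and the role you assign to the continuity/isolated-zeros hypothesis on $\omega$ is not the correct one (it is used in \cite{HaslerHinrichsSiebert.2021a} for the compactness/localization step, not to tame $v/\omega_\varepsilon$).

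The paper's proof avoids re-deriving any of this: it adds the auxiliary parameter $\mu$ in front of $\sigma_x$, cites \cite[Thm.~2.8]{HaslerHinrichsSiebert.2021a} for the implication $\limsup_{m\downarrow0}|\tilde E''_{\lambda,m}(0)|<\infty\Rightarrow\rho(\lambda)>0$, cites \cite[Cor.~2.14]{HaslerHinrichsSiebert.2021c} for the identity $\tilde E''_{\lambda,m}(0)=-\lim_T\tfrac1T\iint\tilde\EE_{\alpha(\lambda),T}[X_sX_t]\,\d s\,\d t$, and then uses the monotonicity of \cref{lem:correlationthermodynamic} to remove the mass. Your sketch would become correct if you replaced the unjustified ``FK $\Rightarrow$ susceptibility'' step by these citations, at which point it coincides with the paper's argument.
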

\begin{rem}
	It would be desirable to phrase the above integrability criterion in terms of $\tau_{\alpha(\lambda),1}$ instead of more general correlation functions. Using monotonicicty of correlations in the domain, it is not difficult to see that the integrability of $t\mapsto \tau_{\alpha(\lambda),1}(t)$ is necessary for \cref{Equation: Condition for existence} to hold. However, it is unclear to the authors if the reverse implication holds. 
\end{rem}
\begin{proof}
	Let us first review the strategy of \cite{HaslerHinrichsSiebert.2021a}.
 In the article, the authors study the Hamiltonian
	\begin{align*}
		\tilde H_{\lambda,m}(\mu) \coloneqq H_\lambda + m \Id\otimes \dG(1) + \mu\begin{pmatrix}0 & 1 \\1 & 0	\end{pmatrix}\otimes \Id
	\end{align*}
	and its ground state energy $\tilde E_{\lambda,m}(\mu)\coloneqq \inf\sigma(\tilde H_{\lambda,m}(\mu))$.
	For $m>0$, the model has a spectral gap of size $m$, cf. \cite[App. D]{HaslerHinrichsSiebert.2021c} and thus $\tilde E_{\lambda,m}$ is analytic in $\mu$.
	Combining a compactness argument going back to \cite{GriesemerLiebLoss.2001},
	also see \cite[Sec.~3]{HaslerHinrichsSiebert.2023} for an abstract statement,
	and second order perturbation theory, the main result \cite[Thm.~2.8]{HaslerHinrichsSiebert.2021a} under our assumptions gives the implication
	\begin{align*}
		\limsup_{m\downarrow0}|\tilde E_{\lambda,m}''(0)|<\infty
		\Longrightarrow
		\rho(\lambda)>0.
	\end{align*}
	To verify the assumption, we use \cite[Cor.~2.14]{HaslerHinrichsSiebert.2021c}, which gives
	\begin{align*}
		\tilde E_{\lambda,m}''(0) 
        &= -\lim_{T\to\infty}\frac1T \tilde \EE_{\alpha(\lambda),T} \bigg[ \bigg(\int_0^T X_t \d t\bigg)^2
        \bigg] \\
        &=-\lim_{T\to\infty}\frac1T \int_0^T\int_0^T\tilde \EE_{\alpha(\lambda),T}[X_tX_s]\d t\d s \\
        &\ge - \limsup_{T\to\infty}\frac1T \int_0^T\int_0^T\EE_{\alpha(\lambda),T}[X_tX_s]\d t\d s,
	\end{align*}
	where $\tilde \EE_{\alpha,T}$ is defined similar to \cref{def:Ising}, but with $g$ replaced by the expression from \cref{eq:SBIsingMapping} with $\omega$ replaced by $\omega+m$.
	The lower bound is thus a simple corollary of the monotonicity in interaction strength, cf. \cref{lem:correlationthermodynamic,lem:GKS}.
	Combining the two observations above yields the statement.
\end{proof}
This gives us the
\begin{proof}[\textbf{Proof of \cref{prop:SB.IRdiv}}]
	In \cite{HaslerHinrichsSiebert.2021b}, the authors prove that
	\begin{align*}
		\lim_{T\to \infty}\frac1T\int_0^T\int_0^T\EE_{\alpha(\lambda),T}[X_tX_s]\d t\d s<\infty,
	\end{align*}
	if $\alpha(\lambda)< \frac 1{10}\|g\|_1 = \frac1{10}\|\omega^{-1/2}v\|_2^2$, where the last equality follows from \cref{eq:SBIsingMapping} and Fubini's theorem. Combined with \cref{prop:existencecorbound}, this proves the statement.
\end{proof}
%

%\begin{rem}
	%\label{rem:sharpness}
	%Comparing the statements of \cref{cor:absence,prop:existencecorbound} gives us sufficient criteria to distinguish whether $\lambda\le \lambda_0$ or $\lambda\ge \lambda_0$. Since the relation of the two criteria is not clear at this point, their necessity and thus the question how critical couplings in the spin boson and Ising model relate remain open up to date. However, we do stress that for {\em discrete} Ising models the implication from absence of long range order to exponential decay of correlations (and thus, at least in spirit a similar statement to the assumption in \cref{prop:existencecorbound}) is known \cite{AizenmanBarskyFernandez.1987,DuminilCopinTassion.2016}, whence the authors believe that a sharpness result is also within reach in our continuous Ising model.
	%This would presumably also shed light on the question raised in \cref{rem:lambda0}. In this prospect it is noteworthy that the assumption in \cref{prop:existencecorbound} and the technique used in \cite{HaslerHinrichsSiebert.2021a} in fact imply that the boson number expectation in the spin boson ground state is finite for couplings at which the assumption of \cref{prop:existencecorbound} holds.
%\end{rem}

{
Let us finally compare our results with the work \cite{Spohn.1989} of Spohn. 
As mentioned in the introduction, his approach is to not worry about the 
existence of ground states in the spectral sense, but to {\em define} them as
zero temperature limits of thermal (KMS) states. If $\e^{-\beta H_\lambda}$ 
would be a trace class operator, a thermal state would just be the map 
$A \mapsto \operatorname{tr}(A \e^{-\beta H_\lambda})/\operatorname{tr}(\e^{-\beta H_\lambda})$ defined on the algebra of 
bounded linear operators. Since however $\e^{-\beta H_\lambda}$ is not trace 
class, a less direct approach via approximation is needed; 
the result are the KMS states, which are maps $\omega_\beta$ from a 
(restricted) set of linear operators to the complex numbers, that play the 
role of the trace above. In \cite{Spohn.1989}, it is shown that the action of
$\omega_\beta$ on suitable linear operators 
can be expressed by functional integrals similar to \cref{def:Ising}, 
where the $T$ in \cref{def:Ising} is replaced by the inverse temperature 
$\beta$, but more importantly with periodic boundary conditions. The latter 
is natural to expect, since the trace formula together with the Feynman--Kac 
formula frequently leads to periodic boundary conditions, such as e.g. in 
the theory of Bose--Einstein condensation, see \cite{BetzUeltschi.2009} for 
example. The zero temperature limit is then the $\beta \to \infty$
limit of this expression, which Spohn shows to exist. 
As in our case, the continuum Ising model (but with periodic boundary conditions) is 
the key tool to show this. In analogy to the standard (discrete) Ising model, the 
continuum Ising model can be studied with $+$ and $-$ boundary conditions; due to the 
long range interaction, these boundary conditions actually need to determine the value 
of all spins outside a finite volume, up to infinity. Spohn shows that for the 
continuum Ising model, a phase transition exists: 
if the coupling constant $\alpha$ is less than some critical $\alpha_c$, the 
expectation of the spin at $0$ has the value zero in the infinite volume limit for all boundary conditions. Above $\alpha_c$, the same expectation with respect to the $+$ boundary condition 
is strictly positive. Symmetry then implies the non-uniqueness of the infinite volume Gibbs measure, as usual. It is then shown that the limiting KMS state mentioned above can be represented by a 
functional integral with respect to the unique limiting Gibbs measure for $\alpha < \alpha_c$, while for $\alpha > \alpha_c$ the functional integral is with respect to  an equal mixture of the $+$ and the $-$ infinite volume Gibbs measure. This establishes a phase 
transition in the sense of KMS states. 

It is tempting to 
conjecture that  the critical parameter $\alpha_c$ established in \cite{Spohn.1989} corresponds precisely to the critical parameter from our \cref{thm:SB.int}, i.e.\ the coupling strength where the ground state in Fock space ceases to exist. In particular,  it is shown in \cite{Spohn.1989} that the boson number operator has finite expectation for the KMS state obtained below $\alpha_c$, but infinite expectation for the KMS state above $\alpha_c$.  
A proof of this connection is, however, missing so far. One problem to overcome is that the Ising model 
we treat below has different (free) boundary conditions, and that it actually only is defined on 
a half line. In contrast to the usual theory, we do {\em not} only seek to understand 
correlations deep in the bulk, but also between the bulk and the boundary. As the Ising model is long range, such differences are potentially significant, and thus 
a firm connection between the results of 
\cite{Spohn.1989} and ours is still missing. 
}

\section{Long Range Order in Continuum Ising Models}
\label{sec:longrangeorder}\label{Section: Long-range order at large coupling}
To conclude the proof of \cref{thm:SB.IRdiv} from \cref{cor:absence},
we need to prove long range order under the given assumptions, which is the objective of this \lcnamecref{sec:longrangeorder}.
\begin{thm}
	\label{thm:LRO}
	If $g(t)\ge C(1+t^2)^{-1}$ for some $C>0$ and all $t\in\IR$, then for sufficiently large $\alpha>0$
	\begin{equation*}
		\inf_{t \geq 0} \tau_{\alpha, 1}(t) > 0.
	\end{equation*}
\end{thm}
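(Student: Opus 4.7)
The plan is to reduce, via the discretization of \cref{lem:GKS}, the claim to known phase transition results for one-dimensional discrete long-range Ising (equivalently, percolation) models with couplings of order $1/k^2$ at large distances.

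By \cref{lem:correlationthermodynamic} and the monotone convergence $\tau_{\alpha,1}(t)=\sup_{T\ge t}\tau_{\alpha,1,T}(t)$, it is enough to produce a lower bound on $\tau_{\alpha,1,T}(t)$ that is uniform in $t\ge 0$ and in all sufficiently large $T$. By \cref{eq:convdiscIsing}, this further reduces to a uniform (in $T$, $N$, and $t$) lower bound on $\mathbb{E}_{\alpha,T,N}[X_0 X_{i_N(t)}]$ for the discrete Ising model on $\Lambda(T,N)$. Under the hypothesis $g(t)\ge C(1+t^2)^{-1}$, the long-range couplings $2\alpha\delta^2 g(k\delta)$ in $\mathbb{P}_{\alpha,T,N}$ satisfy $2\alpha\delta^2 g(k\delta)\ge \alpha C/k^2$ whenever $k\delta\ge 1$; thus the effective large-distance couplings decay like $1/k^2$ independently of the discretization scale $\delta$. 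This places us in the borderline universality class for which Fröhlich--Spencer (and via a percolation route Newman--Schulman and Aizenman--Chayes--Chayes--Newman) established a phase transition with spontaneous magnetization at large coupling.

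Concretely, I would proceed as follows: (i) use GKS (\cref{lem:GKS}) to lower-bound the discrete two-point function by that of a reference model whose couplings vanish at short distances but satisfy $\tilde J(k,l)\ge \alpha C/|k-l|^2$ at large lattice distances; (ii) pass to the random-cluster representation with $q=2$, expressing the two-point function as a connectivity probability $\mathbb{P}_{\mathrm{FK}}[0\leftrightarrow k]$; (iii) by the Edwards--Sokal coupling and FKG, stochastically dominate the FK bond configuration from below by independent Bernoulli long-range percolation on $\{0,1,\ldots\}$ with edge probabilities comparable to $\alpha C/|i-j|^2$ at large distances; (iv) invoke the Newman--Schulman theorem on one-dimensional $1/k^2$ Bernoulli percolation to obtain, at large enough $\alpha$, a uniform (in $k$ and in system size) lower bound on the connection probability from $0$ to $k$, using translation invariance and FKG to upgrade an ``infinite cluster containing $0$'' statement to the required uniform two-point bound; (v) take the limits $N\to\infty$ and $T\to\infty$, using the monotonicity of \cref{lem:correlationthermodynamic} to preserve the lower bound.

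I expect the main obstacle to be the one-sided nature of our continuum Ising model. The classical long-range phase transition results are formulated on the two-sided lattice $\mathbb{Z}$ or with periodic boundary conditions, and GKS only allows one to go from smaller to larger domains, i.e.\ it produces upper, not lower, bounds on one-sided correlations in terms of two-sided ones. My approach circumvents this by running the percolation argument directly on $\{0,1,\ldots\}$: the Newman--Schulman block-renormalization scheme is essentially one-sided in character and should adapt with only minor modifications. A secondary technical point is that the discrete couplings are $\delta$-dependent and only match the canonical $1/k^2$ form beyond lattice distance $\sim 1/\delta$; this will be addressed by a coarse-graining to the scale $1/\delta$, at which the effective block-to-block couplings become genuinely $\delta$-independent and of order $1/(\text{block distance})^2$.
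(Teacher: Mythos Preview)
Your overall skeleton matches the paper's: pass to the FK representation, stochastically dominate by Bernoulli long-range percolation, invoke Newman--Schulman for $1/k^2$ bonds, and use FKG plus uniqueness of the infinite cluster to get a uniform two-point lower bound. Where you diverge is precisely at the point you yourself flag as the main obstacle, and there your proposal has a real gap.

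You write that you will ``run the percolation argument directly on $\{0,1,\ldots\}$'' and then, in the same breath, ``use translation invariance and FKG to upgrade an `infinite cluster containing $0$' statement to the required uniform two-point bound.'' But the half-line model is not translation invariant, so this step as written does not go through; and adapting Newman--Schulman's renormalization to the half-line, while plausible, is not a minor modification you can wave through. The paper avoids this entirely by a concrete device you are missing: it coarse-grains the \emph{continuum} percolation on $[0,\infty)$ into a site-bond percolation on $\mathbb{N}_0$ and then uses the folding bijection $\varphi:\mathbb{N}_0\to\mathbb{Z}$ (evens to the nonnegatives, odds to the negatives) together with the elementary estimate $|\varphi(t)-\varphi(s)|\ge|t-s|/4$ to show that the one-sided percolation at parameter $\alpha$ stochastically dominates the genuinely translation-invariant two-sided percolation at a smaller parameter $\beta$. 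Newman--Schulman, uniqueness of the infinite cluster, and FKG are then applied on $\mathbb{Z}$, where they are standard, and pulled back through $\varphi$.

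A second, smaller divergence: you propose to stay at fixed discretization scale $\delta$ throughout and handle the $\delta$-dependence by a coarse-graining at scale $1/\delta$. The paper instead first passes to a \emph{continuum} Bernoulli percolation on $[0,\infty)$ (taking the $N\to\infty$ limit of the discrete Bernoulli percolation, established in a separate appendix), and only then coarse-grains to unit intervals. This eliminates the $\delta$-dependence cleanly and makes the subsequent site-bond percolation on $\mathbb{N}_0$ have parameters that depend only on $\alpha$ and $g$, not on any discretization.
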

\begin{rem}
	We remark that our slow decay assumption is always satisfied if our continuous function $g$ is strictly positive everywhere and $g(t)\sim  |t|^{-\kappa}$ for some $\kappa\le 2$ as $|t|\to\infty$.
\end{rem}
For the proof we modify an argument going back to Spohn \cite{Spohn.1989} (who showed non-zero magnetization under $+$-boundary conditions at large coupling) to adjust for the differences in boundary conditions (empty rather than $+$) and domain ($[0, \infty)$ rather than $\R$).

Let us define a continuum percolation on $[0, \infty)$ in the following manner:
\begin{itemize}
	\item Draw from a Poisson point process $\xi_1$ on $[0, \infty)$ that has intensity measure $ 2\d x$. Denote the points of $\xi_1$ by $x_1 < x_2 < \hdots$ and set $x_0 \coloneqq 0$.
	\item Independently from this, draw from a Poisson point process $\xi_2$ on $[0, \infty)^2$ with intensity measure $2 \alpha g(t-s) \1_{\{s<t\}} \d s \d t$. We will call the points of $\xi_2$ bonds.
	\item For $n, m\in \N_0$ with $n<m$, connect the intervals $[x_n, x_{n+1})$ and $[x_m, x_{m+1})$ if and only if there exists a bond $(s, t)$ in $\xi_2$ such that $s \in [x_n, x_{n+1})$ and $t \in [x_m, x_{m+1})$.
\end{itemize}
For $x, y \in[0, \infty)$, we write $x \leftrightarrow y$ in case that $x$ and $y$ are contained in the same connected component. We denote by $\mathbf P_{\alpha,\sfc}$ the law of our continuum percolation.

The next \lcnamecref{Lemma: Lower bound on correlation via percolation} will allow us to prove long range order, by studying the continuum percolation. Essentially, the proof relies on a discretization argument for the percolation model, similar to the one described in the proof of \cref{lem:GKS} and discussed in \cref{Appendix: Convergence to the continuum percolation} in detail, as well stochastic domination for discrete percolation models, cf. \cite{AizenmanChayesChayesNewman.1988}.
\begin{lem}
	\label{Lemma: Lower bound on correlation via percolation}
	For all $n \in \mathbb N$ we have $\tau_{\alpha, 1}(n) \geq \mathbf P_{\alpha,\sfc}(0 \leftrightarrow n)$. 
\end{lem}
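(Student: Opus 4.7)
The plan is to combine a random-cluster (Fortuin--Kasteleyn) representation of the discrete Ising approximation $\mathbb P_{\alpha,T,N}$ from \cref{eq:discreteIsing} with the Aizenman--Chayes--Chayes--Newman stochastic domination of the $q=2$ random-cluster measure by Bernoulli percolation, and then to pass to the continuum limit $\delta = T/N \to 0$ followed by $T \to \infty$. This mirrors Spohn's strategy in \cite{Spohn.1989}, adjusted for free boundary conditions and a half-line domain, and it relies on the percolation convergence worked out in \cref{Appendix: Convergence to the continuum percolation}.

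Concretely, for fixed $T > n$ and $N$ with $n \in \Lambda(T,N)$, the Edwards--Sokal coupling identifies
\[
\mathbb E_{\alpha,T,N}[X_0 X_n] = \phi^{\mathrm{FK}}_{T,N}(0 \leftrightarrow n),
\]
where $\phi^{\mathrm{FK}}_{T,N}$ is the (free boundary) $q=2$ random-cluster measure with edge probabilities $1 - \e^{-2\beta_{ij}}$, the $\beta_{ij}$ being the couplings appearing in \cref{eq:discreteIsing}. The classical FK comparison inequality then stochastically dominates $\phi^{\mathrm{FK}}_{T,N}$ by independent Bernoulli percolation with edge probabilities $p^{\mathrm{ber}}_{ij} = \tanh(\beta_{ij})$, and since $\{0 \leftrightarrow n\}$ is increasing,
\[
\mathbb E_{\alpha,T,N}[X_0 X_n] \;\geq\; \mathbf P^{\mathrm{ber}}_{T,N}(0 \leftrightarrow n).
\]

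The next step is to read off the scaling limit of $\mathbf P^{\mathrm{ber}}_{T,N}$. For nearest-neighbor edges ($i \sim j$) with coupling $-\tfrac12\log\delta$,
\[
1 - p^{\mathrm{ber}}_{ij} = \frac{2\delta}{1+\delta} = 2\delta + O(\delta^2),
\]
so the closed nearest-neighbor edges thin out to a Poisson point process on $[0,T]$ of intensity $2\,\d x$, exactly the atoms of $\xi_1$ that determine the intervals $[x_n, x_{n+1})$. For long-range edges with coupling $2\alpha\delta^2 g(|i-j|)$,
\[
p^{\mathrm{ber}}_{ij} = 2\alpha\delta^2 g(|i-j|) + O(\delta^4),
\]
and the open long-range edges converge to a Poisson point process on $\{(s,t) : 0 \leq s < t \leq T\}$ of intensity $2\alpha g(t-s)\,\d s\,\d t$, matching $\xi_2$. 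The convergence of the non-local event $\{0 \leftrightarrow n\}$ under this joint scaling is the content of \cref{Appendix: Convergence to the continuum percolation}. Together with \cref{eq:convdiscIsing} on the Ising side, this yields $\mathbb E_{\alpha,T}[X_0 X_n] \geq \mathbf P^{[0,T]}_{\alpha,\sfc}(0 \leftrightarrow n)$, where the superscript denotes the continuum percolation restricted to $[0,T]$.

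Finally, sending $T \to \infty$, the left-hand side converges to $\tau_{\alpha,1}(n)$ by \cref{lem:correlationthermodynamic}, while the right-hand side is monotone in $T$ (a larger domain only adds further intervals and bonds, which can only improve connectivity) and converges to $\mathbf P_{\alpha,\sfc}(0 \leftrightarrow n)$. I expect the main technical obstacle to be the joint continuum limit of the Bernoulli percolation in the second step: the nearest-neighbor edge probabilities tend to one while the long-range ones tend to zero, and the connection event depends non-locally on both scales. Once this convergence is established in the appendix, the remainder of the argument reduces to the algebraic matching of the parameters $\tanh(\beta_{ij})$ with the intensities of $\xi_1$ and $\xi_2$.
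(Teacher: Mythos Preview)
Your proposal is correct and follows essentially the same route as the paper: the FK representation of the discrete Ising correlations, the ACCN stochastic domination of the $q=2$ random-cluster measure by Bernoulli percolation, the continuum limit of the Bernoulli model handled in \cref{Appendix: Convergence to the continuum percolation}, and finally $T\to\infty$ via monotonicity. The only cosmetic differences are that the paper works with the slightly weaker but asymptotically equivalent Bernoulli probabilities $p_{\alpha,N}$ of \cref{def:paN} rather than your exact $\tanh(\beta_{ij})$, and that your sentence ``stochastically dominates $\phi^{\mathrm{FK}}_{T,N}$ by independent Bernoulli percolation'' is phrased in the wrong direction---what you actually need (and correctly use in the displayed inequality) is that the FK measure dominates the Bernoulli one.
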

\begin{proof}
	Our proof follows that of \cite[Thm.~2\,(iii)]{Spohn.1989}.
	We present it here, nevertheless, for the convenience of the reader.
	
	Let us fix some $T, N\in \mathbb N$ and $n\in \mathbb N \cap \Lambda(T, N)$. As a first step, we note that we can identify our discrete Ising model with a FK-Percolation model on the complete graph $(\Lambda(T, N), \binom{\Lambda(T, N)}{2})$: We define the probability of a subgraph $\omega = (\Lambda(T, N), \mathcal E(\omega))$ by
	\begin{equation*}
		\mathcal P_{\alpha, T, N}(\{\omega\}) \propto 2^{\mathcal C(\omega)} \prod_{\{i, j\} \in \mathcal E(\omega)} \tilde p_{\alpha, N}(i, j)  \prod_{\{i, j\} \notin \mathcal E(\omega) } (1 - \tilde p_{\alpha, N}(i, j))
	\end{equation*}
	where $\mathcal C(\omega)$ is the number of connected components of $\omega$ and
	\begin{equation}
		\label{def:tildepaN}
		\tilde p_{\alpha, N}(i, j) = \begin{cases}
			1 - \delta(T, N) &\text{ if } |i - j| = \delta(T, N) \\
			1 - e^{- 4\alpha \delta(T, N)^2 g(i-j)} &\text{ else.}
		\end{cases}
	\end{equation}
	Then,   by \cite[Lemma 2.1]{AizenmanChayesChayesNewman.1988}, it holds that
	\begin{equation}
		\label{Equation: Correlations in terms of FK-Percolation}
		\mathbb E_{\alpha, T, N}(X_0 X_n) = \mathcal P_{\alpha, T, N}(0 \leftrightarrow n),
	\end{equation}
	where we recall that the left hand side was defined in \cref{eq:discreteIsing}.
	
	As a consequence of the FKG-inequality for FK-percolation, we have the stochastic domination \cite[Theorem 4.1]{AizenmanChayesChayesNewman.1988}
	\begin{equation}
		\label{eq:stochdom}
		\mathcal P_{\alpha, T, N} \succeq \mathbf P_{\alpha, T, N} 
	\end{equation} 
	where $\mathbf P_{\alpha, T, N}$ denotes independent bond percolation on $(\Lambda(T, N), \binom{\Lambda(T, N)}{2})$, where each edge $\{i, j\}$ is open with probability 
	\begin{equation}
		\label{def:paN}
		p_{\alpha, N}(i, j) \coloneqq
		\begin{cases}
			1 - 2\delta(T, N) &\text{ if } |i - j| = \delta(T, N) \\
			1 - e^{- 2\alpha \delta(T, N)^2 g(i-j)} &\text{ else.}
		\end{cases}
	\end{equation}
	As we show in \cref{Appendix: Convergence to the continuum percolation}, cf. \cref{Proposition: Convergence to continuum percolation},
	\begin{equation}
		\label{eq:percolim}
		\mathbf P_{\alpha, T, TN}(0 \leftrightarrow n) \xrightarrow{N\to\infty} \mathbf P_{\alpha, T, \sfc}(0 \leftrightarrow n),
	\end{equation}
	where $\mathbf P_{\alpha, T, \sfc}$ denotes the restriction of $\mathbf P_{\alpha, \sfc}$ to $[0, T]$.
	
	Combining \cref{eq:convdiscIsing,Equation: Correlations in terms of FK-Percolation,eq:stochdom,eq:percolim}
	and taking the limit $T\to \infty$ yields the claim by continuity from below.
\end{proof}
We can now move to the
\begin{proof}[\textbf{Proof of \cref{thm:LRO}}] 
	First, we notice that it is sufficient to show that $\inf_{n \geq 0} \tau_{\alpha, 1}(n) > 0$, since
	the GKS inequalities \cref{lem:GKS} imply 
	\begin{equation*}
		\tau_{\alpha, 1}(t) = \lim_{T\to \infty} \mathbb E_{\alpha, T}[X_0 X_{\lfloor t \rfloor} X_{\lfloor t \rfloor} X_t] \geq \lim_{T\to \infty} \mathbb E_{\alpha, T}[X_0 X_{\lfloor t \rfloor}]  \mathbb E_{0, T}[X_{\lfloor t \rfloor} X_t] \geq \tau_{\alpha, 1}(\lfloor t \rfloor) e^{-2}.
	\end{equation*}
	It is thus natural to work with discrete percolation models.
	
	Starting with our continuum percolation, we can define a site-bond-percolation on the complete graph $(\mathbb N, \binom{\mathbb N}{2})$ in the following manner: Fix some $n\in \mathbb N$. The process $\xi_1$ partitions the interval $[n, n+1]$ into subintervals. We call $n$ alive if all of these subintervals are connected via a path of bonds in $\xi_2$ that all start and end in $[n, n+1]$. We denote by $p_0(\alpha)$ the probability that $0$ is alive. For $n, m \in \N$ with $n< m$, we define the edge $\{n, m\}$ as open if there exists a bond $(s, t)$ of $\xi_2$ with $s\in [n, n+1)$ and $t\in [m, m+1)$. Then the probability that the edge $\{n, m\}$ is open is given by
	\begin{equation}
		\label{def:pnma}
		p_{n, m}(\alpha) = 1 - \exp\Big(-2\alpha \int_n^{n+1} \int_m^{m+1} g(t-s) \, \d s \d t  \Big).
	\end{equation}
	We denote by $\mathbf P_{\alpha, \sfd}$ the law of our discrete site-bond-percolation. For $t\in \mathbb N$, we write $0\leftrightarrow t$ in case that $0$ and $t$ are connected by a path of open edges that only contain vertices that are alive. Notice that for all $n \in \mathbb N$
	\begin{equation}
		\label{Equation: two point function percolation discrete vs continuous}
		\mathbf P_{\alpha, \sfc}(0 \leftrightarrow n) \geq \mathbf P_{\alpha, \sfd}(0 \leftrightarrow n).
	\end{equation}
	Since the results we want to utilize translation-invariance, we now extend the discrete percolation to all of $\IZ$.
	Therefore, for $\alpha>0$, we define $\mathbf Q_{\alpha, \sfd}$ as the equivalent of $\mathbf P_{\alpha,  \sfd}$ defined on all of $\mathbb Z$, i.e., the site-bond percolation on $(\mathbb Z, \binom{\mathbb Z}{2})$ where each $n\in \mathbb Z$ is alive with probability $p_0(\alpha)$ and where $\{n, m\}$ is open with probability $p_{n, m}(\alpha)$.
	
	Using that
	\begin{equation}
		\label{Equation: Estimate exponential}
		x - \frac{x^2}{2} \leq 1 - e^{-x} \leq x + \frac{x^2}{2}, \qquad x\ge 0,
	\end{equation}
	%for all $x \geq 0$, 
	it follows that for fixed $\alpha>0$
	\begin{equation*}
		p_{n, m}(\alpha) \ge \frac{C\alpha}{2} |n-m|^{-2}
		\quad\text{as $|n-m| \to \infty$.}
	\end{equation*}
	One further easily observes that $p_0(\alpha), p_{0, 1}(\alpha) \uparrow 1$ as $\alpha \to \infty$.
	By \cite{NewmanSchulman.1986}, there hence\footnote{Notice that $p_{n, m}(\alpha)$ is increasing in $\alpha$. We can hence fix $p_{n, m}(\alpha)$ for $|n-m| \neq 1$ to be $p_{n, m}(\alpha_0)$ with $\alpha_0$ large and only increase $p_0(\alpha)$ as well as $p_{0,1}(\alpha)$} exists a $\beta > 0$ such that 
	\begin{equation}
		\label{Equation: System percolates}
		\mathbf Q_{\beta, \sfd}(0 \leftrightarrow \infty) > 0
	\end{equation}
	where $\{0 \leftrightarrow \infty \}$ denotes the event that the connected component of $0$ (considering only living sites) is infinite. As $\mathbf Q_{\beta, \sfd}$ defines a translation-invariant, irreducible site-bond percolation model, \cref{Equation: System percolates} implies that almost surely there exists an unique infinite connected component (see the discussion at the end of \cite{AizenmanKestenNewman.1987}), which we call $\mathcal C_\infty$.
	Hence,
	\begin{equation}
		\label{eq:infinitecluster}
		\mathbf Q_{\beta, \sfd}(0 \leftrightarrow t) \geq \mathbf Q_{\beta, \sfd}(0 \in \mathcal C_\infty, t \in \mathcal C_\infty) \geq \mathbf Q_{\beta, \sfd}(0 \in \mathcal C_\infty)^2 > 0,
	\end{equation}
	by the FKG inequality for site-bond percolation.
	
	It remains to prove
	\begin{equation}
		\label{eq:onesidedvstwosided}
		\mathbf P_{\alpha, \sfd}(0 \leftrightarrow t) \geq \mathbf Q_{\beta, \sfd}(0 \in \mathcal C_\infty)^2 > 0
	\end{equation}
	for $\alpha$ sufficiently large.
	W.l.o.g. and thanks to the monotonicity statement of \cref{lem:correlationthermodynamic}, we can restrict to the case $g(t)=\frac{C}{1+t^2}$.
	To compare our one-sided model, i.e., the site-bond percolation on $\mathbb N$ with law $\mathbf P_{\alpha, \sfd}$ with the two-sided model on $\Z$ with law $\mathbf Q_{\beta, \sfd}$,
	 we define the bijection
	\begin{equation*}
		\varphi : \mathbb N_0 \to \mathbb Z, \quad \varphi(n) \coloneqq 
		\begin{cases}
			n/2 &\text{ if }n\in 2\mathbb N_0 \\
			-(n + 1)/2 &\text{ if }n\in 2\mathbb N_0 + 1
		\end{cases}
	\end{equation*}
	and extend $\varphi$ to a bijection $[0, \infty) \to \mathbb R\setminus (-1, 0)$ that is linear with slope $1$ on all intervals of the form $[2n, 2n+1)$ and  linear with slope $-1$ on all intervals of the form $[2n+1, 2n+2)$.
	Then $|\varphi(t) - \varphi(s)| \geq |t-s|/4$ for all $s, t\geq 0$ with $|\lfloor s \rfloor- \lfloor t \rfloor|\neq 2$.
	Inserting this together with our choice of $g$ into \cref{def:pnma},
%	
%	Since $t\mapsto g(t/6)/g(t)$ is continuous, strictly positive and converges to a positive constant (as a consequence of our decay assumption), there exists a constant $c>1$ such that $g(t/6) \leq c g(t)$ for all $t\geq 0$.
%	 therefore
%	we easily find
%	\begin{equation*}
%		g(|\varphi(t) - \varphi(s)|) \leq g(|t-s|/4) \leq c g(|t-s|)
%	\end{equation*}
%	for those $s, t$ which yields 
	we easily find $c>1$ such that
	\begin{equation*}
		\label{Equation: Estimate probabilities}
		p_{\varphi(n), \varphi(m)}(\alpha/c) \leq p_{n, m}(\alpha) \text{ for } |n-m| \notin \{0, 2\}.
	\end{equation*}
	Thus, if $\alpha>0$ is sufficiently large such that $\alpha/c > \beta$ and $p_{0, 2}(\alpha) > p_{0, 1}(\beta)$,
	then $p_{n, m}(\alpha) \geq p_{\varphi(n), \varphi(m)}(\beta)$ for all $n, m\in \mathbb N$ with $n\neq m$ and hence
	\begin{equation*}
		%\label{eq:onesidedvstwosided}
		\mathbf P_{\alpha, \sfd}(0 \leftrightarrow t) \geq  \mathbf Q_{\beta, \sfd}(0 \leftrightarrow \varphi(t)). %\geq \mathbf Q_{\beta, \mathrm{d}}(0 \in \mathcal C_\infty)^2 > 0
	\end{equation*}
	This combined with \cref{eq:infinitecluster} proves \cref{eq:onesidedvstwosided}.
	
	The statement now follows by combining \cref{Equation: two point function percolation discrete vs continuous,eq:onesidedvstwosided,Lemma: Lower bound on correlation via percolation}.
\end{proof}
Let us conclude, by summarizing the
\begin{proof}[\textbf{Proof of \cref{thm:SB.IRdiv}}]
	This now follows by combining \cref{cor:absence,thm:LRO}.
\end{proof}

\appendix
\section{Convergence to the continuum percolation}
\label{Appendix: Convergence to the continuum percolation}
In the following, we will show the convergence of the discrete to the continuum percolation, as we used it in the proof of \cref{Lemma: Lower bound on correlation via percolation}, cf. \cref{eq:percolim}.
%i.e., we will show that for all $T\in \N$ and $n\in [0, T] \cap \N$
%\begin{equation}
%	\label{Equation: Convegrence discrete to continuum percolation}
%	\lim_{N\to \infty} \mathbf P_{\alpha, T, TN}(0 \leftrightarrow n) = \mathbf P_{\alpha, T, \mathrm{c}}(0 \leftrightarrow n).
%\end{equation}
While an almost identical statement was already used in \cite{Spohn.1989}, there was no proof given therein.

To show convergence, we will identify our discrete percolation with a suitable point process. For $d\geq 1$ and a compact set $K \subseteq \mathbb R^d$, we denote by $\mathbf N(K)$ the configuration space of point processes on $K$,
i.e., the set of all integer-valued measures on $K$
\begin{equation*}
	\mathbf N(K) \coloneqq \{\mu \in \mathbf M(K):\, \mu(A) \in \mathbb N_0 \text{ for all }A\in \mathcal B(K)\},
\end{equation*}
where $\mathbf M(K)$ denotes the set of all finite Borel measures on $K$.

We define percolation on $\mathbf N([0, T]) \times \mathbf N([0, T]^2)$ in accordance with \cref{Section: Long-range order at large coupling}: Let $(\mu_1, \mu_2) \in \mathbf N([0, T]) \times \mathbf N([0, T]^2)$. The atoms $0\leq x_1 < \hdots < x_n \leq T$ of $\mu_1$ partition the interval $[0, T]$ into disjoint intervals 
\begin{equation*}
	[0, x_1) \cup [x_1, x_2) \cup \hdots \cup [x_{n}, T].
\end{equation*}
We connect two distinct intervals $I$ and $J$ of the partition if $\mu_2$ has an atom $(s, t)$ such that $s\in I$ and $t\in J$.

Our discrete percolation with law $\mathbf P_{\alpha, T, N}$ can be embedded by defining the point processes $(\xi_{N, 1}, \xi_{N, 2})$ by
\begin{equation*}
	\xi_{N, 1} = \sum_{ i\in \Lambda(T, N)\setminus \{0\}} \1_{\{(i-\delta(T, N), i) \text{ is closed}\}} \delta_{i}, \quad \xi_{N, 2} = \sum_{ i, j \in \Lambda(T, N)} \1_{\{j-i>\delta(T, N) \text{ and }(i, j) \text{ is open}\}} \delta_{(i, j)}.
\end{equation*}
Notice that the partitioning of $\Lambda(T, N)$ that is induced by $\xi_{N, 1}$ consists of the connected components one obtains by only considering nearest neighbour bounds.

Our continuum percolation $\mathbf P_{\alpha, T, \sfc}$ is given by the point processes $\xi = (\xi_{1}, \xi_{2})$:
Let $\xi_1$, $\xi_2$ be two independent Poisson point processes on $[0, T]$ and $[0, T]^2$ respectively, where $\xi_1$ has intensity measure $2\d x$ and $\xi_2$ has intensity measure $2\alpha g(t-s) \1_{\{0\leq s<t\leq T\}} \, \d s \d t$.

We will show that $(\xi_{N, 1}, \xi_{N, 2})$ converges in distribution to $\xi = (\xi_{1}, \xi_{2})$ as $N\to \infty$. Showing that $\{0 \leftrightarrow n\}$ is a continuity set of the distribution $\mathbf P_{\alpha, T, \sfc}$ of $\xi$ will then imply \cref{eq:percolim} by the Portmanteau-Theorem.

\bigskip
Before continuing, let us recall a few facts about the convergence of random measures.
Therefore, we equip $\mathbf M(K)$ with the topology $\mathcal \tau_w$ of weak convergence, i.e., the topology generated by the family of maps $\mu \mapsto \mu(f)$, $f\in C(K)$. It is well known that the space $(\mathbf M(K), \tau_w)$ is Polish and that its topology is generated by the Prohorov metric defined by
\begin{equation}
	\label{def:prohorov}
	d(\mu, \nu) \coloneqq \inf \{\varepsilon>0:\, \mu(A) \leq \nu(A_\varepsilon) + \varepsilon \text { and }\nu(A) \leq \mu(A_\varepsilon) + \varepsilon \text{ for all }A \in \mathcal B(K)\},
\end{equation}
where $A_\varepsilon \coloneqq \{y \in K:\, \operatorname{dist}(y, A) < \varepsilon\}$ for $A\in \mathcal B(K)$ and $\varepsilon>0$, see for example \cite{Kallenberg.2017}.
We denote by $\mathcal M(K) \coloneqq \sigma(\tau_w)$ the Borel-$\sigma$-algebra generated by $\tau_w$. As one might show with the monotone class theorem, $\mathcal M(K)$ coincides with the $\sigma$-algebra generated by the evaluation maps $\mu \mapsto \mu(A)$ with $A\in \mathcal B(K)$.
A random measure $\zeta$ on $K$ now is a $(\mathbf M(K), \mathcal M(K))$-valued random variable and a measurable set $A\in \mathcal B(K)$ is called a $\zeta$-continuity set if $\zeta(\partial A) = 0$ almost surely.

We define 
\begin{equation*}
	\mathcal G(K) \coloneqq \{(a, b] \cap K: a, b\in \mathbb R^d\}, \quad \mathcal S(K) \coloneqq \Big\{ \bigcup_{i=1}^n I_i:\, n\in \mathbb N,\, I_1, \hdots, I_n \in \mathcal G(K) \Big\}
\end{equation*}
(where $(a, b] \coloneqq \bigtimes_{i=1}^d (a_i, b_i]$ for $a, b \in \R^d$). Then $\mathcal S(K)$ is a DC-semi-ring in the terminology of \cite{Kallenberg.1973} and we obtain as a Corollary of \cite[Theorem 1.1]{Kallenberg.1973}
\begin{lem}
	\label{Lemma: Fidi convegene}
	Let $\zeta$ be a random measure on $K$ such that all $S\in \mathcal S(K)$ are $\zeta$-continuity sets. Let $(\zeta_n)_n$ be a sequence of random measures on $K$. Then $\zeta_n \to \zeta$ in distribution if and only if for all $k\in \N$ and all pairwise disjoint $S_1, \hdots, S_k \in \mathcal G(K)$
	\begin{equation}
		\label{Equation: fidi convergence}
		(\zeta_n(S_1), \hdots, \zeta_n(S_k)) \to (\zeta(S_1), \hdots, \zeta(S_k))
	\end{equation}
	in distribution as $n\to \infty$.
\end{lem}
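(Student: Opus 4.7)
The plan is to apply Theorem~1.1 of \cite{Kallenberg.1973}, which provides a general fidi criterion for convergence in distribution of random measures, once one produces a DC-semi-ring of continuity sets generating $\mathcal B(K)$. The work thus reduces to verifying the hypotheses of that theorem in our setting and handling the two directions separately.

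First I would check that $\mathcal G(K)$ is a DC-semi-ring in Kallenberg's sense. It is closed under finite intersections, since
\begin{equation*}
	\bigl((a,b]\cap K\bigr)\cap\bigl((c,d]\cap K\bigr) = (a\vee c,\, b\wedge d]\cap K
\end{equation*}
is again an element of $\mathcal G(K)$ (the empty intersection being the empty set). Taking boxes with rational corners yields a countable subfamily whose unions exhaust every relatively open subset of $K$, so $\sigma(\mathcal G(K))=\mathcal B(K)$. The defining ``DC'' (dissecting) property amounts to refining any such box into arbitrarily fine sub-boxes, which is immediate.

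For the forward direction, suppose $\zeta_n\to\zeta$ in distribution on $(\mathbf M(K),\tau_w)$. For pairwise disjoint $S_1,\dots,S_k\in\mathcal G(K)\subseteq\mathcal S(K)$, the hypothesis gives $\zeta(\partial S_i)=0$ almost surely. By the Portmanteau characterization of weak convergence, the evaluation map $\mu\mapsto\mu(S)$ is $\tau_w$-continuous at every $\mu$ with $\mu(\partial S)=0$, so the joint evaluation $\mu\mapsto(\mu(S_1),\dots,\mu(S_k))$ is $\tau_w$-continuous at $\zeta$-almost every realization. The continuous mapping theorem then yields \eqref{Equation: fidi convergence}.

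For the reverse direction I would invoke Kallenberg's theorem directly: the assumed fidi convergence on pairwise disjoint elements of the DC-semi-ring $\mathcal G(K)$, together with the $\zeta$-continuity of every $S\in\mathcal S(K)$ (which controls unions and differences of $\mathcal G(K)$-sets), matches the hypotheses of Theorem~1.1 in \cite{Kallenberg.1973} and hence delivers $\zeta_n\to\zeta$ in distribution. The only real obstacle is a bookkeeping one: aligning the notation and definitions of \cite{Kallenberg.1973} (DC-semi-ring, continuity sets, finite vs.\ $\sigma$-finite intensity) with the formulation used here. Once this translation is in place the lemma follows as an immediate corollary, and no further probabilistic input is required.
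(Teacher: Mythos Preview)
Your approach is essentially the same as the paper's: both cite \cite[Theorem~1.1]{Kallenberg.1973} as the engine, and both handle the forward direction via the continuous mapping theorem (the paper simply absorbs this into the cited equivalence). The only difference is that the paper applies Kallenberg with the DC-semi-ring $\mathcal S(K)$ rather than $\mathcal G(K)$, but this is cosmetic.

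There is, however, a small gap in your reverse direction. As the paper reads Kallenberg's Theorem~1.1, the criterion there requires fidi convergence \eqref{Equation: fidi convergence} for \emph{all} (not necessarily pairwise disjoint) tuples $S_1,\dots,S_k$ from the DC-semi-ring, not only disjoint ones. So you cannot simply say the hypotheses ``match''; you still have to pass from disjoint tuples in $\mathcal G(K)$ to arbitrary tuples. The paper does this in one line: given $S_1,\dots,S_k\in\mathcal S(K)$, choose pairwise disjoint $\tilde S_1,\dots,\tilde S_m\in\mathcal G(K)$ refining them, so that there is a linear map $f:\mathbb R^m\to\mathbb R^k$ with
\[
\bigl(\mu(S_1),\dots,\mu(S_k)\bigr)=f\bigl(\mu(\tilde S_1),\dots,\mu(\tilde S_m)\bigr)\qquad\text{for all }\mu\in\mathbf M(K),
\]
and then the continuous mapping theorem upgrades your assumed disjoint fidi convergence to the general one. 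Once you insert this step, your proof is complete and coincides with the paper's.
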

\begin{proof}
	By \cite[Theorem~1.1]{Kallenberg.1973}, $\zeta_n \to \zeta$ in distribution is equivalent to \cref{Equation: fidi convergence} holding for all $k\in \mathbb N$ and all (not necessarily pairwise disjoint) $S_1, \hdots, S_k \in \mathcal S(K)$. However, if $S_1, \hdots, S_k \in \mathcal S(K)$ then there exists pairwise disjoint $\tilde S_1, \hdots, \tilde S_m \in \mathcal G(K)$ and a linear map $f:\mathbb R^m \to \mathbb R^k$ such that
	\begin{equation*}
		\big(\mu(S_1), \hdots, \mu(S_k)\big) = f\big(\mu(\tilde S_1), \hdots, \mu(\tilde S_m) \big)
	\end{equation*}
	for all $\mu \in \mathbf M(K)$. Hence, the statement follows by the continuous mapping theorem.
\end{proof}
In order to apply \cref{Lemma: Fidi convegene} to our case, we will need the following generalization of the Poisson limit theorem (which is a special case of \cite[Proposition 1.4]{LastPenrose.2017}).
\begin{lem}
	\label{Lemma: Generalized Poisson limit theorem}
	Assume that $X_{n, 1}, \hdots, X_{n, m_n}$ are for any $n\in \mathbb N$ independent $\{0, 1\}$ valued random variables such that
	\begin{equation*}
		\lim_{n \to \infty} \max_{1\leq i \leq m_n} \mathbb P(X_{n, i} = 1) = 0 \quad \text{ and }\quad \lim_{n \to \infty} \sum_{i=1}^{m_n} \mathbb P(X_{n, i} = 1) = \gamma > 0.
	\end{equation*}
	Then for any $k\in \mathbb N_0$
	\begin{equation*}
		\lim_{n \to \infty} \mathbb P\Big( \sum_{i=1}^{m_n} X_{n, i} = k\Big) = \operatorname{Poi}(\gamma) (\{k\}) \coloneqq \frac{e^{-\gamma} \gamma^k}{k!}.
	\end{equation*}	
\end{lem}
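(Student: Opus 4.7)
The plan is to reduce the statement to convergence of probability generating functions (PGFs). Writing $S_n \coloneqq \sum_{i=1}^{m_n} X_{n,i}$ and $p_{n,i} \coloneqq \mathbb{P}(X_{n,i} = 1)$, independence of the $X_{n,i}$ gives
\[
G_{S_n}(z) \coloneqq \mathbb{E}[z^{S_n}] = \prod_{i=1}^{m_n} \bigl(1 + p_{n,i}(z - 1)\bigr), \qquad z \in [0,1].
\]
Since each $S_n$ is $\mathbb{N}_0$-valued, it suffices to prove that $G_{S_n}(z) \to e^{\gamma(z-1)}$ uniformly on $[0,1]$: each $G_{S_n}$ extends to an analytic function on the closed unit disk with $|G_{S_n}| \leq 1$ there, so by Vitali's theorem the real-line convergence upgrades to uniform convergence on compact subsets of the open unit disk, whence Cauchy's integral formula (or term-by-term identification of power series) yields convergence of each Taylor coefficient $\mathbb{P}(S_n = k)$ to the $k$-th Taylor coefficient of $e^{\gamma(z-1)}$, namely $e^{-\gamma}\gamma^k/k!$.

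The heart of the proof is a short Taylor estimate of the logarithm. Using $\max_{1 \leq i \leq m_n} p_{n,i} \to 0$, for $n$ large enough one has $|p_{n,i}(z-1)| \leq p_{n,i} \leq 1/2$ uniformly in $i$ and $z \in [0,1]$, so the elementary bound $|\log(1+w) - w| \leq |w|^2$ valid for $|w| \leq 1/2$ yields
\[
\Bigl| \log G_{S_n}(z) - (z-1)\sum_{i=1}^{m_n} p_{n,i} \Bigr| \leq (z-1)^2 \sum_{i=1}^{m_n} p_{n,i}^2 \leq (z-1)^2 \Bigl(\max_{1 \leq i \leq m_n} p_{n,i}\Bigr) \sum_{i=1}^{m_n} p_{n,i}.
\]
By the two hypotheses, the right-hand side tends to $0$ uniformly on $[0,1]$, while $(z-1)\sum_i p_{n,i} \to \gamma(z-1)$ uniformly, so $\log G_{S_n}(z) \to \gamma(z-1)$ and hence $G_{S_n}(z) \to e^{\gamma(z-1)}$ uniformly on $[0,1]$.

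No serious obstacle is anticipated: the argument is essentially the classical Poisson limit theorem adapted to a triangular array of independent Bernoulli variables, and both ingredients (Taylor expansion of $\log(1+w)$ and PGF convergence implying coefficient convergence) are standard. The only mildly delicate point is the upgrade from pointwise convergence of PGFs on $[0,1]$ to convergence of power-series coefficients, but this is handled routinely via uniform boundedness of $|G_{S_n}|$ on the closed unit disk and Vitali's theorem.
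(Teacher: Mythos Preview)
Your proof is correct. The PGF factorisation, the Taylor bound $|\log(1+w)-w|\le |w|^2$ for $|w|\le 1/2$, and the passage from uniform convergence of bounded analytic functions on $[0,1]$ to convergence of Taylor coefficients via Vitali--Montel are all sound. One could shortcut the last step by invoking the continuity theorem for probability generating functions directly (pointwise convergence of PGFs on $[0,1)$ to a PGF implies convergence in distribution, and for $\mathbb N_0$-valued variables this gives convergence of point masses), but your complex-analytic route is equally valid.

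The paper does not actually prove this lemma: it simply records that the statement is a special case of \cite[Proposition~1.4]{LastPenrose.2017} and moves on. Your argument is therefore not a different route so much as a self-contained substitute for a citation. This has the advantage of making the appendix independent of an external reference for what is, as you note, essentially the classical Poisson limit theorem for triangular arrays.
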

We can now prove the convergence of the point processes, defined in \cref{sec:longrangeorder} and the beginning of this section, respectively.
\begin{prop}
	\label{Propostion: Convergence in distribution of point processes}
	We have $(\xi_{N, 1}, \xi_{N, 2}) \to (\xi_1, \xi_2)$ in distribution as $N\to \infty$.
\end{prop}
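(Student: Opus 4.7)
The strategy is to invoke the fidi convergence criterion \cref{Lemma: Fidi convegene} separately for each marginal and then combine them via independence. Since the underlying bond percolation with law $\mathbf P_{\alpha,T,N}$ opens every edge independently, and the nearest-neighbour edges (which drive $\xi_{N,1}$) are disjoint from the long-range edges (which drive $\xi_{N,2}$), the random measures $\xi_{N,1}$ and $\xi_{N,2}$ are independent for every $N$. The limit measures $\xi_1$ and $\xi_2$ are independent by construction. Hence joint convergence in distribution reduces to the two marginal convergences $\xi_{N,1} \to \xi_1$ on $[0,T]$ and $\xi_{N,2} \to \xi_2$ on $[0,T]^2$.

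To apply \cref{Lemma: Fidi convegene}, I first verify that every $S\in \mathcal S(K)$ is a $\xi_j$-continuity set. Both intensity measures are absolutely continuous with respect to Lebesgue measure, while the boundary of any finite union of half-open boxes is Lebesgue-null; hence $\mathbb E[\xi_j(\partial S)]=0$, so $\xi_j(\partial S)=0$ almost surely. Next, fix pairwise disjoint $S_1,\ldots,S_k\in \mathcal G(K)$. The components $\xi_{N,j}(S_1),\ldots,\xi_{N,j}(S_k)$ are sums of Bernoulli variables indexed by disjoint collections of edges, hence mutually independent, and $\xi_j(S_1),\ldots,\xi_j(S_k)$ are independent by the disjointness property of Poisson processes. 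It thus suffices to establish the one-dimensional convergence $\xi_{N,j}(S) \to \xi_j(S)$ in distribution for each $S\in \mathcal G(K)$.

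These one-dimensional convergences will follow from the generalized Poisson limit theorem \cref{Lemma: Generalized Poisson limit theorem}. For $\xi_{N,1}(S)$, the summands are independent Bernoullis with parameter $2\delta(T,N)$ over the grid points $i\in \Lambda(T,N)\setminus\{0\}$ lying in $S$; the maximum probability is $2T/N\to 0$, and the sum of probabilities is a Riemann sum converging to $2|S|=\mathbb E[\xi_1(S)]$. For $\xi_{N,2}(S)$, the summands have parameter $1-\exp(-2\alpha\delta(T,N)^2 g(i'-i))$ indexed by lattice pairs $(i,i')\in S$ with $i'-i>\delta(T,N)$. Continuity of $g$ on the compact interval $[-T,T]$ yields uniform boundedness, so the maximum probability is $O(\delta(T,N)^2)\to 0$. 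Using the expansion $1-\e^{-x}=x+O(x^2)$, the sum of probabilities is a Riemann sum for $2\alpha \iint_S g(t-s)\mathbf 1_{\{s<t\}}\,\d s\,\d t = \mathbb E[\xi_2(S)]$, while the quadratic Taylor error is bounded by $O(N^2\delta(T,N)^4 \|g\|_\infty^2)=O(\delta(T,N)^2)\to 0$. The degenerate case $\mathbb E[\xi_j(S)]=0$ is handled separately by Markov's inequality.

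The main technical obstacle is the Riemann-sum convergence for $\xi_{N,2}$: the indicator $\mathbf 1_{\{s<t\}}$ combined with the restriction $i'-i>\delta(T,N)$ forces a careful treatment of the diagonal strip. However, the contribution omitted by the discrete restriction (pairs with $i'=i+\delta(T,N)$) is bounded by $C\cdot N\cdot \delta(T,N)^2 \|g\|_\infty = O(\delta(T,N))\to 0$ and is therefore asymptotically negligible. Everything else is standard: continuity and local boundedness of $g$ guarantee convergence of the Riemann sums, and independence of Bernoullis across disjoint edge sets takes care of the joint statement.
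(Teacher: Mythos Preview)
Your proposal is correct and follows essentially the same route as the paper: reduce to the two marginals by independence, apply \cref{Lemma: Fidi convegene}, reduce further to a single box by independence across disjoint boxes, and then verify the one-dimensional Poisson convergence via \cref{Lemma: Generalized Poisson limit theorem} (with the $1-\e^{-x}$ expansion and boundedness of $g$ for the Riemann-sum argument). You add a few details the paper leaves implicit---the $\xi_j$-continuity of boxes, the degenerate case $\gamma=0$, and the diagonal-strip contribution---but the overall architecture is identical.
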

\begin{proof}
	We set $\mathcal G_1 \coloneqq \mathcal G([0, T])$ and $\mathcal G_2 \coloneqq \mathcal G([0, T]^2)$.
	By independence of $\xi_{N, 1}$ and $\xi_{N, 2}$, it is sufficient to show that $\xi_{N, 1} \to \xi_1$ and $\xi_{N, 2} \to \xi_2$ in distribution as $N\to \infty$.
	 By \cref{Lemma: Fidi convegene}, it is further sufficient to show for all $i  \in \{1, 2\}$, $k\in \N$ and pairwise disjoint $I_1, \hdots, I_k \in \mathcal G_i$
	\begin{equation*}
		(\xi_{N, i}(I_1), \hdots, \xi_{N, i}(I_k)) \to (\xi_{i}(I_1), \hdots, \xi_{i}(I_k))
		\quad\text{in distribution.}
	\end{equation*}
	By independence of the coordinates, it is even sufficient to show that $\xi_{N, i}(I) \to \xi_{i}(I)$ in distribution for all  $i\in \{1, 2\}$ and $I\in \mathcal G_i$.
	
	For the case $i=1$, we fix $0\leq a<b \leq T$ and notice that
	\begin{equation*}
		\xi_{N, 1}\big((a, b]\big) \sim \operatorname{Bin}\big(n_N, 2\delta(T, N)\big) \text{ with }  n_N \coloneqq \big|(a, b] \cap \Lambda(T, N)\big|.  
	\end{equation*}
	Now $2\delta(T, N) \cdot n_N \xrightarrow{N\to\infty} 2(b-a)$ and hence, by the (standard) Poisson limit theorem,
	\begin{equation*}
		\xi_{N, 1}\big((a, b]\big) \xrightarrow{N\to\infty} \xi_{1}\big((a, b]\big) \sim \operatorname{Poi}\big(2(b-a)\big) \quad \text{in distribution.}
	\end{equation*}
	For the case $i=2$, let us fix $I\in \mathcal G_2$.
	Using the definition \cref{def:paN}, the estimate \cref{Equation: Estimate exponential}  and
	that $g$ is bounded,
	we obtain
	\begin{equation*}
		\sum_{i, j \in \Lambda(T, N)\cap I} \1_{\{i \nsim j, \,i<j \}}  p_{\alpha, N}(i, j) =
		\Big( 2 \alpha \sum_{ i, j\in \Lambda(T, N) \cap I} \1_{\{i<j \}} \delta(T, N)^2 g(j-i) \Big) + \mathcal O(1/N)
	\end{equation*} 
	leading to 
	\begin{equation*}
		\sum_{i, j \in \Lambda(T, N)\cap I} \1_{\{i \nsim j, \,i<j \}} p_{\alpha, N}(i, j) \xrightarrow{N\to\infty} 2 \alpha \int_I g(t-s) \1_{\{0\leq s<t \leq T\}}\d s \d t
	\end{equation*}
	Additionally,
	\begin{equation*}
		\max_{i, j \in \Lambda(T, N)} p_{\alpha, N}(i, j) \leq 2\delta(T, N) \vee \Big(1- e^{-2 \alpha \delta(T, N)^2 g(0)} \Big) \xrightarrow{N\to\infty} 0.
	\end{equation*}
	Hence, with \cref{Lemma: Generalized Poisson limit theorem}, we obtain
	\begin{equation*}
		\xi_{N, 2}(I) \xrightarrow{N\to\infty}  \xi_{2}(I) \sim \operatorname{Poi}\Big(2 \alpha \int_I g(t-s)  \1_{\{0\leq s<t \leq T\}} \, \d s \d t \Big)
		\quad
		\text{in distribution.}\qedhere
	\end{equation*}
\end{proof}
To translate the above lemma into the convergence of the respective two-point functions, we will use the following simple rearrangement lemma for pure point measures which are close in the Prohorov distance.
\begin{lem}
	\label{Lemma: Distance in Prohorov metric}
	If the Prohorov-distance between $\mu = \sum_{i=1}^n \delta_{x_i} \in \mathbf N(K)$ and $\nu = \sum_{j=1}^m \delta_{y_j} \in \mathbf N(K)$ satisfies
	\begin{equation*}
		d(\mu, \nu)  < \varepsilon < 1 \wedge \min \{\tfrac{1}{2}|x-y|:\, x, y \in \operatorname{supp}(\mu),\, x\neq y\}
	\end{equation*}
	then $n=m$ and there exists a permutation $\sigma: \{1, \hdots, n\} \to \{1, \hdots, n\}$ such that
	\begin{equation*}
		\max_{1\leq i \leq n} |x_i - y_{\sigma(i)}| <\varepsilon.
	\end{equation*}
\end{lem}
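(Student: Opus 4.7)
The plan is to exploit the integrality of $\mu$ and $\nu$ together with the defining inequalities of the Prohorov metric \cref{def:prohorov}, tested against very specific Borel sets, namely the whole space $K$ and the singletons consisting of the distinct atoms of $\mu$.

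First, I would identify the distinct atoms of $\mu$: write $\operatorname{supp}(\mu) = \{z_1, \ldots, z_k\}$ with $z_i \neq z_j$ for $i\neq j$, and let $m_i \coloneqq \mu(\{z_i\}) \in \N^+$, so that $n = \sum_{i=1}^k m_i$. Testing the Prohorov inequalities with $A = K$ (for which $A_\varepsilon = K$) and using $\varepsilon < 1$ together with $\mu(K), \nu(K) \in \N_0$ immediately yields $n = \mu(K) = \nu(K) = m$.

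Next, for each $i \in \{1,\ldots, k\}$ I would apply the Prohorov inequality with $A = \{z_i\}$, so that $A_\varepsilon = B(z_i, \varepsilon) \cap K$. This gives
\begin{equation*}
  \nu\bigl(B(z_i, \varepsilon) \cap K\bigr) \geq \mu(\{z_i\}) - \varepsilon = m_i - \varepsilon > m_i - 1,
\end{equation*}
so by integrality $\nu(B(z_i,\varepsilon) \cap K) \geq m_i$. By the assumption on $\varepsilon$, the balls $B(z_i, \varepsilon)$ for $i=1,\ldots,k$ are pairwise disjoint, hence
\begin{equation*}
  m = \nu(K) \geq \sum_{i=1}^k \nu\bigl(B(z_i, \varepsilon) \cap K\bigr) \geq \sum_{i=1}^k m_i = n = m.
\end{equation*}
Equality throughout forces $\nu(B(z_i, \varepsilon) \cap K) = m_i$ for every $i$, and moreover $\nu$ is supported in $\bigcup_i B(z_i,\varepsilon)$. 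In particular, exactly $m_i$ of the atoms $y_1, \ldots, y_m$ lie within distance less than $\varepsilon$ of $z_i$, and every $y_j$ belongs to some $B(z_i, \varepsilon)$.

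Finally, to produce $\sigma$, I would partition $\{1,\ldots,n\}$ into the blocks $I_i \coloneqq \{\ell : x_\ell = z_i\}$ (of size $m_i$) and correspondingly the $\{y_j\}$-indices into $J_i \coloneqq \{j : y_j \in B(z_i, \varepsilon)\}$ (also of size $m_i$ by the above), and define $\sigma$ to be any bijection $\{1,\ldots,n\} \to \{1,\ldots,n\}$ whose restriction to each $I_i$ is a bijection onto $J_i$. Then $|x_\ell - y_{\sigma(\ell)}| = |z_i - y_{\sigma(\ell)}| < \varepsilon$ for every $\ell \in I_i$, giving the claim. The only mildly delicate point is the combinatorial bookkeeping ensuring that the counts on both sides match exactly and cover all atoms of $\nu$; this is handled by the integrality-plus-disjointness argument above, so I do not expect a serious obstacle.
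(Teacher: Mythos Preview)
Your proposal is correct and follows essentially the same approach as the paper: test the Prohorov inequalities on $K$ to get $n=m$ by integrality, then on the singletons $\{z_i\}$ to get $\nu(\{z_i\}_\varepsilon)\ge m_i$, and use disjointness of the $\varepsilon$-balls to force equality everywhere. The paper's version is more terse and leaves the explicit construction of $\sigma$ implicit, whereas you spell out the block-by-block bijection; the underlying argument is identical.
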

\begin{proof}
	Since $|\mu(K) - \nu(K)|< 1$ and $\mu(K), \nu(K) \in \mathbb N_0$, we immediately obtain $\mu(K) = \nu(K)$, i.e., $n=m$.
	Now let $x\in \operatorname{supp}(\mu)$.
	Then, using the notation defined below \cref{def:prohorov}, we have
	\begin{align*}
		\nu(\{x\}_\eps) \ge \mu(\{x\})-\eps, \quad \text{and thus}\quad \nu(\{x\}_\eps)\ge\mu(\{x\}),
	\end{align*}
%	 $\nu(\{x\}_\eps)$
%	 and set $A_x = \{x\}$. Then $\nu(A_{x, \varepsilon}) \geq \mu(A_{x}) - \varepsilon$, i.e., $\nu(A_{x, \varepsilon}) \geq \mu(A_{x})$ 
	 since $\nu(\{x\}_\eps), \mu(\{x\}) \in \mathbb N_0$.
	 From $n=m$ and $\{x\}_\eps \cap \{y\}_\eps = \emptyset$ for $x, y \in \operatorname{supp}(\mu)$ with $x\neq y$, we obtain $\nu(\{x\}_\eps) = \mu(\{x\})$ for all $x\in \operatorname{supp}(\mu)$ and thus the statement.
\end{proof}
We can now prove \cref{eq:percolim}.
\begin{prop}
	\label{Proposition: Convergence to continuum percolation}
	For all $T\in \N$ and $n\in [0, T] \cap \N_0$, we have $\lim_{N\to \infty}\mathbf P_{\alpha, T, NT}(0 \leftrightarrow n) = \mathbf P_{\alpha, T,\sfc}(0 \leftrightarrow n)$. 
\end{prop}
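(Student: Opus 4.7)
The plan is to transfer the distributional convergence $(\xi_{N,1},\xi_{N,2})\to(\xi_1,\xi_2)$ established in \cref{Propostion: Convergence in distribution of point processes} to convergence of the connection probability by showing that, on a full-measure event, the event $\{0\leftrightarrow n\}$ is ultimately determined by the combinatorial structure of the limiting point configuration. Since $(\mathbf M([0,T]^d),\tau_w)$ is Polish, I would invoke Skorokhod's representation theorem so that, after passing to an equivalent probability space, the convergence $(\xi_{N,1},\xi_{N,2})\to(\xi_1,\xi_2)$ holds almost surely in the product Prohorov metric.

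The next step is to identify a full-measure event $\Omega_0$ on which the limit configuration is \emph{generic} in the sense that the atoms of $\xi_1$ are pairwise distinct and lie in $(0,T)\setminus\{n\}$, while the atoms $(s,t)$ of $\xi_2$ are pairwise distinct and each coordinate avoids both the atoms of $\xi_1$ and the points $0$ and $n$. Since $\xi_1$ and $\xi_2$ are Poisson processes with absolutely continuous intensities, the excluded configurations lie on a countable union of null sets, hence $\mathbf P_{\alpha,T,\sfc}(\Omega_0)=1$. For $\omega\in\Omega_0$, let $\eta=\eta(\omega)>0$ be smaller than half of every relevant pairwise distance (among distinct atoms of each process, between atoms of $\xi_1$ and $\{0,n\}$, and between coordinates of $\xi_2$-atoms and the atoms of $\xi_1$ together with $\{0,n\}$); the minimum is taken over finitely many quantities since the Poisson processes have only finitely many atoms on compact sets. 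Then \cref{Lemma: Distance in Prohorov metric} applied to $\xi_{N,1}\to\xi_1$ and $\xi_{N,2}\to\xi_2$ yields, for $N$ large enough, bijections between the atoms of the approximating processes and those of the limits, with each pair of matched atoms within Euclidean distance less than $\eta$.

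The core combinatorial step is to verify that on this good event and for $N$ large, the graph whose connectivity defines $\{0\leftrightarrow n\}$ is identical in the discrete percolation $\mathbf P_{\alpha,T,NT}$ and in the continuum percolation $\mathbf P_{\alpha,T,\sfc}$. Indeed, the interval of the continuum partition containing $0$ has endpoints within $\eta$ of the endpoints of the corresponding discrete interval (or of the boundary $0$ and $T$), so $0$ lies in \enquote{the same} interval in both models; the analogous statement holds for $n$, using that $0$ and $n$ belong to the grid $\Lambda(T,NT)$ for every $N$. Moreover each bond $(u^N,v^N)$ of $\xi_{N,2}$ is within $\eta$ of a bond $(s,t)$ of $\xi_2$ whose endpoints lie at distance at least $2\eta$ from all atoms of $\xi_1$ and from $0,n$, so $(u^N,v^N)$ connects the very same pair of intervals as $(s,t)$. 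The indicator functions $\mathbf 1_{\{0\leftrightarrow n\}}$ therefore agree almost surely on $\Omega_0$ for all large $N$, and dominated convergence closes the argument. The principal obstacle is this combinatorial matching: one must rule out that small perturbations cause a bond endpoint to \enquote{jump} across a $\xi_1$-atom or that one of the grid points $0$ or $n$ migrates into a neighboring interval. This is precisely why $\Omega_0$ excludes $\xi_2$-atoms whose coordinates touch $\xi_1$-atoms or the special points $0,n$, and why the choice of $\eta$ must be made uniformly over all these separations at once.
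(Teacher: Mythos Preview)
Your proposal is correct and follows essentially the same strategy as the paper: both identify a full-measure set of ``generic'' (the paper calls them \emph{stable}) limit configurations on which, via \cref{Lemma: Distance in Prohorov metric}, small Prohorov perturbations cannot alter the connectivity structure. The only difference is packaging---the paper invokes the Portmanteau theorem directly by showing that $\{0\leftrightarrow n\}$ is a continuity set of the law of $\xi$, whereas you pass through Skorokhod's representation and dominated convergence; these are equivalent ways of expressing the same argument.
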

\begin{proof}
	By \cref{Propostion: Convergence in distribution of point processes} and the Portmanteau Theorem, it is sufficient to show that $\{0 \leftrightarrow n\}$ is a continuity set of the distribution $\mathbf P_{\alpha, T,\sfc}$ of $\xi$.
    %see the beginning of \cite[Chapter 4]{Kallenberg.2017} for a detailed discussion.
    
	We call a configuration $(\mu_1, \mu_2) \in \mathbf N([0, T]) \times \mathbf N([0, T]^2)$ stable if $\mu_1$ only has atoms of mass 1 and if $x\cap\{n,s,t\}=\emptyset$ holds for all $x\in \operatorname{supp}(\mu_1)$ and $(s, t) \in \operatorname{supp}(\mu_2)$.
	If $(\mu_1, \mu_2)$ is stable, then \cref{Lemma: Distance in Prohorov metric} yields that $(\mu_1, \mu_2)$ is an interior point of either $\{0 \leftrightarrow n\}$ or $\{0 \nleftrightarrow n\}$ (w.r.t. to the product topology).
	Hence,
	\begin{equation*}
		\mathbb P\big(\xi \in \partial \{0 \leftrightarrow n\} \big) \leq \mathbb P(\xi \text{ is not stable}) = 0. \qedhere
	\end{equation*}
\end{proof}

\begin{rem}
	By extending the arguments above, one can represent spin correlations under $\mathbb P_{\alpha, T}$ in terms of a continuum FK-percolation by taking the limit $N\to \infty$ in \cref{Equation: Correlations in terms of FK-Percolation}. Let $\tilde \xi_1, \tilde \xi_2$ be independent Poisson point processes with intensity measures $\1_{\{0\leq x\leq T\}}\d x$ and $4\alpha g(t-s) \1_{\{0\leq s<t\leq T\}}\, \d s \d t$ respectively and let $\widetilde{\bfP}_{\alpha, T, \sfc}$ be the law of $\tilde \xi = (\tilde \xi_1, \tilde \xi_2)$. Let $\tilde \xi_N = (\tilde \xi_{N, 1}, \tilde \xi_{N, 2})$ be its discrete version, defined in the same manner as $(\xi_{N, 1}, \xi_{N, 2})$ after replacing $p_{\alpha, N}$ by $\tilde p_{\alpha, N}$ (see \cref{def:tildepaN,def:paN} for the definitions). Then $\tilde \xi_N \xrightarrow{N\to\infty} \tilde \xi$ in distribution.
	We denote by $\cP_{\alpha, T, \sfc}$ the law of the continuum FK-percolation on $\bfN([0, T]) \times \bfN([0, T]^2)$ defined by 
	\begin{equation*}
		\cP_{\alpha, T, \sfc}(\d\xi) \propto 2^{\cC (\xi)} \widetilde{\bfP}_{\alpha, T, \sfc}(\d \xi)
	\end{equation*}
	where $\cC$ denotes the number of connected components.
	The set of discontinuity points of $2^\cC$ is a null set under $\widetilde{\bfP}_{\alpha, T, \sfc}$. Using that
	\begin{equation*}
		\cC(\tilde \xi_N)  \leq \tilde \xi_{N, 1}([0, T]) \sim \operatorname{Bin}\big(N, \delta(T, N) \big)
	\end{equation*}
	one can show that the sequence $(2^{\mathcal C(\tilde \xi_N)})_N$ is uniformly integrable. With the Portmanteau Theorem, one hence obtains $\mathcal P_{\alpha, T, N} \to \mathcal P_{\alpha, T, \sfc}$ weakly, and, by taking the limit $N\to \infty$ in \cref{Equation: Correlations in terms of FK-Percolation},
	\begin{equation*}
		\mathbb E_{\alpha, T}[X_0 X_t] = \mathcal P_{\alpha, T, \sfc}(0 \leftrightarrow t). 
	\end{equation*}
\end{rem}

%%%%%%%%%%%%%%%%%%%%%%%%%%%%%%%%%%%%%%%%%%%%%%%%%
%%%%%%%%%%%%%%%%%%%%%%%%%%%%%%%%%%%%%%%%%%%%%%%%%
%%%%%%%%%%%%%%%%%%%%%%%%%%%%%%%%%%%%%%%%%%%%%%%%%

\bibliographystyle{halpha-abbrv}
%\bibliography{00lit}
\bibliography{../../Literature/00lit}

\end{document}